\newtheorem{definition}{Definition}
\newtheorem{theorem}{Theorem}
\newtheorem{lemma}{Lemma}
\newtheorem{corollary}{Corollary}
\newtheorem{proposition}{Proposition}
\newcommand{\PF}{\mathrm{PF}}
\newcommand{\IM}{\mathrm{IM}}
\newcommand{\UT}{\mathrm{UT}}
\newcommand{\NSW}{\mathrm{NSW}}
\newcommand{\GGI}{\mathrm{GGI}}
\newcommand{\EG}{\mathrm{EG}}
\newcommand{\NWU}{\mathrm{NW}_{U}}
\newcommand{\NWbeta}{\mathrm{NW}_{\beta}}
\author[1]{Joshua Caiata}
\author[2]{Carter Blair}
\author[1]{Kate Larson}
\affil[1]{University of Waterloo}
\affil[2]{Harvard University}
\affil[ ]{\texttt{jcaiata@uwaterloo.ca, carterblair@g.harvard.edu, kate.larson@uwaterloo.ca}}
\title{Procedural Fairness in Multi-Agent Bandits}
\date{}
\begin{document}

\maketitle

\begin{abstract}
In the context of multi-agent multi-armed bandits (MA-MAB), fairness is often reduced to outcomes: maximizing welfare, reducing inequality, or balancing utilities. However, evidence in psychology, economics, and Rawlsian theory suggests that fairness is also about process and who gets a say in the decisions being made. We introduce procedural fairness as equal voice and formalize it for MA-MABs as equal representation within a policy over arms, using a core-stable Nash welfare objective based on representation rather than utility. Empirical results confirm that fairness notions based on optimizing for outcomes sacrifice equal voice and representation, while the sacrifice in outcome-based objectives (like equality and utilitarianism) is minimal under procedurally fair policies. We further prove that different fairness notions prioritize fundamentally different and incompatible values, highlighting that fairness requires explicit normative choices. This paper argues that procedural legitimacy deserves greater focus as a fairness objective and provides a framework for putting procedural fairness into practice.
\end{abstract}

\section{Introduction}

From the Magna Carta to the words that open constitutions and charters around the world, we have long understood that dignity and fairness require more than simply providing a good outcome. To be fair requires treating each person as an equal, entitled to a voice in the decisions that affect their lives. Yet, in the multi-agent systems we build today, this truth is too often forgotten \cite{joseph2016fairness, hossain2021fair, jones2023efficient}. Fairness is almost always reduced to optimizing for a specific outcome: the sum of utilities, the balancing of welfare, or the smoothing of inequality, echoing the consequentialist tradition of judging actions by their aggregate results \cite{sep-consequentialism, smart1973utilitarianism}. While these notions of fairness may provide elegance and tractability, they miss the very essence of what we consider to be fair. They consider what is gained by a set of decisions, not how they were decided. This is an imposition of values from outside of the system, rather than respecting the very agency from within.

This paper begins from the conviction that fairness in multi-agent systems must be grounded not in optimal outcomes, but in the principle of equal voice. To guarantee this is to honour the dignity of participation in multi-agent decision-making; to ignore it is to risk building systems that sacrifice legitimacy, that is, whether the decisions themselves are perceived as rightful and acceptable, for the sake of efficiency. This principle reflects a contractualist view of fairness, which holds that a decision is only legitimate if it cannot be reasonably rejected by those subject to it \cite{scanlon1998we}. We call this principle procedural fairness.

This moral insight is not merely philosophical. Extensive evidence in psychology and economics shows that people consistently value fair process, even if that means outcomes are less than ideal \cite{ANAND2001247, lind2013social, tyler1990people}. For example, \citet{lind2013social} recount an example of a woman whose traffic ticket case was dismissed; however, she still left the courtroom angry because she felt she had compelling evidence and the judge never heard her argument. In fact, many people reported feeling the same way, despite being handed the best possible outcome from the court. This same dynamic appears in collective allocation systems like participatory budgeting, where communities not only want good projects but a voice in which projects are chosen, or may only want to contribute to projects of which they approve, a concept referred to as decomposability in the literature.

Yet existing approaches in multi-agent learning overwhelmingly reduce fairness to outcomes such as utilitarianism, Nash welfare, or inequality. While they capture important values, they impose fairness as an external criterion rather than letting it arise from the agents themselves. What is missing in the literature is a framework that gives agents themselves an equal share of decision-making power. Inspired by Rawls' notion of \emph{pure procedural justice} \cite{41832fbd-4a82-3c18-ae99-e02c0f77c64f}, we formalize procedural fairness in MA-MABs, a framework where each action (pulling an arm) produces potentially different rewards for each agent, sampled from potentially different distributions. This framework naturally captures both the allocation of benefits and the distribution of decision-making power in a simple and easy-to-understand way.

To situate procedural fairness, we compare it against two outcome-based objectives: \emph{inequality minimization}, where outcomes are distributed so that agents receive outcomes that are as equal as possible, and \emph{utilitarianism}, which maximizes aggregate welfare. We also benchmark against other fairness objectives, such as egalitarian welfare, Nash welfare, and generalized Gini welfare, which balance these concerns in different ways.


Our central claim is that procedural fairness deserves recognition alongside traditional notions of fairness like Nash welfare, inequality, and utilitarianism, not as an alternative, but as a principle of legitimacy. Our technical contributions include:

\begin{itemize}
\item A formal definition of procedural fairness for MA-MABs.
    \item Impossibility results showing procedural fairness and outcome-based fairness are fundamentally incompatible, implying that fairness requires normative choices.
    \item An algorithm for learning a procedurally fair policy with sublinear regret guarantees.

    \item Empirical evaluation across multiple settings illustrating that procedural fairness balances efficiency and equality while preserving legitimacy.
\end{itemize}

\noindent{\bf Related Work}

Our work draws on established insights from psychology, economics, and political philosophy holding that process fairness is distinct from outcome fairness. In psychology, Tyler and Lind \citep{lind2013social, tyler1990people} demonstrate that individuals inherently value having a voice, with empirical evidence confirming a preference for fair processes over purely optimal outcomes \citep{ANAND2001247}. In political philosophy, Rawlsian pure procedural justice \citep{41832fbd-4a82-3c18-ae99-e02c0f77c64f} and Scanlon's contractualism \citep{scanlon1998we} locate legitimacy in the justifiability of decision-making rules themselves.

Within the bandit literature, existing fairness formulations broadly focus on arm-centric constraints or private item allocation rather than procedural voice in public goods. Arm-centric approaches guarantee minimum pull rates \citep{patil2021achieving} or enforce merit-based selection \citep{joseph2016fairness}. Online fair division sequentially allocates private items under envy-freeness or proportionality constraints \citep{procaccia2024honor, schiffer2025improved}, contrasting with our setting,  which is more analogous to a public-good setting, where an arm choice yields rewards for all agents. The most similar setting is MA-MABs with public rewards, where arm pulls yield $N$-dimensional reward vectors. However, here too the literature focuses primarily on utilitarian coordination \citep{chakraborty2017coordinated} or outcome-based fairness metrics---such as Nash Social Welfare via product ($NSW_{prod}$) \citep{hossain2021fair, jones2023efficient} or geometric mean \citep{zhang2024no}, the Generalized Gini Index \citep{busa2017multi}, and Pareto optimality \citep{xu2023pareto}. In contrast, our framework incorporates procedural representation directly.

Our formulation is related to participatory budgeting, particularly the notion of decomposability, as well as donor coordination, and random dictatorship: with known means, it resembles budget aggregation \citep{BRANDL2022102585,aziz2024fairlotteriesparticipatorybudgeting, BOGOMOLNAIA2005165}. Our contribution lies in proposing procedural fairness for the bandit setting, as well as the broader philosophical point that representation is a valuable, and often overlooked, objective in its own right. 

\section{Multi-Agent Multi-Armed Bandits}

Let $N \geq 2$ be the number of agents and $K \geq 2$ the number of arms, and let $[N] := \{1, \dots, N\}$ and $[K] := \{1, \dots, K\}$. A \textit{policy} is a distribution $P = (p_1, \dots, p_K) \in \Delta^K$, where $p_k$ is the probability that arm $k$ is pulled and $\Delta^K := \{p \in \mathbb{R}_{\ge 0}^K : \sum_{k=1}^K p_k = 1\}$ is the probability simplex over the arms.

When an arm is pulled, all agents receive some reward drawn from a distribution that stays fixed across time steps. Agents will not necessarily receive the same reward, and distributions may vary from agent to agent and from arm to arm. We let $\mu^* \in \mathbb{R}^{N \times K}$ represent the agents' true reward means, where $\mu^*_{i,k}$ represents the mean reward agent $i$ receives when arm $k$ is pulled. Additionally, let $\hat{\mu}^t$ denote the agents' reward estimates at time $t$, where $\hat{\mu}^t_{i,k}$ is the estimate at time $t$ of the reward agent $i$ receives when arm $k$ is pulled. 
Finally, for a mean-reward matrix $\mu$, define agent $i$'s set of favourite arms as $F_i(\mu) := \{ j \in [K] \mid \mu_{i,j} = \max_{k \in [K]} \mu_{i,k} \}$, and write $F_i := F_i(\mu^*)$ for the true favourite-arm set.

In all instances, we assume that each true reward mean is bounded between 0 and 1, i.e., $\mu^*_{i,k} \in [0,1]$ for all $i \in [N]$ and $k \in [K]$, and that drawn rewards are in $[0, 1]$ from distributions fixed across time steps, with rewards i.i.d.\ across pulls for each agent-arm pair.

We now define the following concepts:

\begin{definition}[Utility]
    The utility of an agent $i$ under a policy $P = (p_1, \dots, p_K)$ is defined as $U_i(P) = \sum_{k=1}^{K} p_k \mu^*_{i,k}$.
\end{definition}

\begin{definition}[Decision Share] The decision share of agent \( i \) under a policy \( P = (p_1, \dots, p_K) \) is the total probability assigned to the agent's favourite arm(s): $\beta_i(P) = \sum_{k \in F_i} p_k$.
\end{definition}

\begin{definition}[Utility-Based Nash Welfare]
    The utility-based Nash welfare under a policy $P$ is defined as $\NWU(P) := \prod_{i=1}^N U_i(P) = \prod_{i=1}^N \left( \sum_{k=1}^K p_k \mu^*_{i,k} \right)$.
\end{definition}

\begin{definition}[Decision-Share-Based Nash Welfare]
    The decision-share-based Nash welfare under a policy $P$ is $\NWbeta(P) := \prod_{i=1}^N \beta_i(P) = \prod_{i=1}^N \left( \sum_{k \in F_i} p_k \right)$. In other words, it is the Nash welfare of the agents' decision shares.
\end{definition}

\section{Procedural Fairness in Multi-Agent Multi-Armed Bandits}



Procedural fairness distinguishes representation from outcome quality. Outcome-based objectives (like Nash welfare or utilitarianism) can yield high expected utility for an agent without ever selecting their top choice. Consider two children ranking car seats: front, back-right, then back-left. If one child always gets back-right while the other alternates between front and back-left, an outcome-based metric may be satisfied, yet only one child ever receives the front seat. The first child is compensated with a good second choice, but lacks equal representation. Favourable lower-ranked outcomes may compensate for utility, but cannot substitute for equal influence over decisions.

We therefore model influence as equal and non-transferable. For $N$ agents, equal influence requires partitioning the policy mass into $N$ equal $1/N$ shares, with each agent's share assigned exclusively to their favourite arms. This design choice mirrors the single non-transferable vote, where each voter's support is non-transferable and counts only toward their top choice. While traditional models might suggest agents redirect their shares toward a mutually acceptable non-favourite arm to boost expected rewards, procedural fairness denies that such compromises are equivalent. Utility, $U_i(P)$, measures outcome value, not the intrinsic value of having preferences represented. Consequently, any inefficiency arising from preserving representation is not a defect, but a direct result of treating procedural legitimacy as an objective in its own right.

While procedural fairness in MA-MABs can be implemented via mechanisms like random dictatorship \citep{gibbard77}, we focus on learning a specific probability distribution over arms. Here, equal voice is instantiated at the policy level: rather than selecting agents directly, a policy is procedurally fair if its arm-selection probabilities can be decomposed into equal, preference-consistent agent shares. Specifically:

\begin{definition}[Procedural Fairness in MA-MABs]
    \label{df:proceduralfairness}
    Let \(P\) be a policy, and \(y_{i,k}\geq 0\) denote the probability mass that agent \(i\) contributes to arm \(k\). A policy $P$ is procedurally fair if the probability distribution can be attributed to agents according to the following conditions:

    \par\hangindent=1em\textbf{1. Equal decision-making influence.}
    Each agent \( i \in [N] \) is allocated an equal share of the total probability mass, $\sum_{k=1}^{K} y_{i,k} = \frac{1}{N}$ for every $i \in [N]$, where \( y_{i,k} \) represents the probability mass that agent \( i \) contributes to selecting arm \( k \).

    \par\hangindent=1em\textbf{2. Preference-based allocation.}
    Each agent's attributed probability mass is assigned to their most-preferred arm(s), defined as the set of arms with the highest mean reward for that agent: $F_i = \{ j \in [K] \mid \mu^*_{i,j} = \max_{k \in [K]} \mu^*_{i,k} \}$. If multiple arms achieve the same maximum expected reward, an agent may distribute their probability mass arbitrarily among them. Specifically, $y_{i,k} = 0$ for every $i \in [N]$ and every $k \notin F_i$.

    \par\hangindent=1em\textbf{3. Policy consistency.}
    The probability assigned to each arm equals the total probability mass contributed to that arm by all agents: $p_k = \sum_{i=1}^{N} y_{i,k}$ for every $k \in [K]$, where \(y_{i,k} \geq 0\) for every agent \(i\) and arm \(k\).
\end{definition}

To measure how procedurally fair a given policy is (which we call a \emph{score}), we formulate an optimization problem. The intuition is: given some probability distribution, can we allocate $\frac{1}{N}$ of probability on behalf of each agent on their favourite arms, subject to the given policy? The extent to which we can allocate the attributed probability mass is the procedural fairness score. We define $y_{i,j}$ as an allocation variable, representing how much of agent $i$'s decision share is allocated to arm $j$, provided that arm $j \in F_i(\mu)$.  The optimization, which we call $\PF(\mu, P)$, is as follows:
\begin{align*}
\max \quad &\sum_{i=1}^{N} \sum_{j \in F_i(\mu)} y_{i,j} \\
\text{subject to} \quad &\sum_{j \in F_i(\mu)} y_{i,j} \leq \frac{1}{N} \quad \forall i \in [N] \\
&\sum_{i: j \in F_i(\mu)} y_{i,j} \leq p_j \quad \forall j \in [K] \\
&y_{i,j} = 0 \quad \forall i \in [N],\ \forall j \notin F_i(\mu), \; y_{i,j} \geq 0 \; \forall i, j
\end{align*}

The score lies in $[0,1]$ and $\PF(\mu^*, P) = 1$ if and only if $P$ is procedurally fair. The following example illustrates the score.

\subsection{Illustrative Procedural-Fairness Score}
\label{sec:pf-score-example}
To illustrate, consider the utility matrix
\[
\mu =
\begin{bmatrix}
1 & 0 \\
1 & 0 \\
0 & 1 \\
\end{bmatrix}
\]
and policy $P=(\frac{1}{2}, \frac{1}{2})$.
We aim to find $y_{i,j}$ for each agent $i$ and arm $j$, distributing each agent's $\frac{1}{3}$ share of the total probability mass to their favourite arm without exceeding any arm's total availability in the policy $P = (\frac{1}{2}, \frac{1}{2})$.

Agent 1 prefers arm 1. Allocate their full share: $y_{11} = \frac{1}{3}$, $y_{12} = 0$. Then, agent 2 also prefers arm 1. Since $\frac{1}{3}$ is already used by agent 1 with arm 1, only $\frac{1}{6}$ remains (recall that $y_{11} + y_{21} \leq \frac12$ must hold, and we already have $\frac13 + y_{21} \leq \frac12$ from agent 1 allocating their share of the probability mass to arm 1). So: $y_{21} = \frac{1}{6}$, $y_{22} = 0$. Finally, agent 3 prefers arm 2. We can fully allocate their share of the probability mass to arm 2, so: $y_{32} = \frac{1}{3}$, $y_{31} = 0$. Summing all $y_{i,j}$ gives $\frac{5}{6}$, the procedural fairness score. This reflects agent-level allocation ratios of 1, 0.5, and 1, averaging to $\frac{5}{6}$.

\subsection{Proportionality Guarantee}

Procedurally fair policies also have additional guarantees with respect to the total amount of decision share that each agent will receive:

\begin{proposition}
\label{thm:RRimpliesProp}
A procedurally fair policy gives agents at least $1/N$ of their maximum decision share \textit{and} at least $1/N$ of their maximum achievable utility, in expectation.
\end{proposition}

\begin{proof}
    The maximum decision share an agent can obtain is 1, attained by placing all probability on their favourite arm(s). Since a procedurally fair policy guarantees each agent at least $1/N$ probability on their favourite arm(s), each agent receives at least $1/N$ of their maximum decision share. Similarly, agent $i$'s maximum achievable expected utility is $\max_{k} \mu^*_{i,k}$, attained by placing all probability on a favourite arm. Because a procedurally fair policy assigns at least $1/N$ probability to favourite arms, it gives agent $i$ at least $1/N$ of this maximum utility in expectation, thus giving them a proportional share. This is similar to the concept of \emph{proportionality} \citep{conitzer2017fair}.
\end{proof}

\section{Learning Procedural Fairness}

To learn a procedurally fair policy, we formulate an objective that ensures each agent allocates an equal amount of the probability mass to their most preferred arms. This formulation is then used in our algorithm for learning a procedurally fair policy (which we describe later in the paper). Additionally, recall that $\hat{\mu}^t_{i,k}$ is the estimated mean of arm $k$ for agent $i$ at time step $t$. The procedurally fair policy is selected by maximizing the decision-share-based Nash welfare:
\[
\max_{p \in \Delta^K} \prod_{i=1}^N \sum_{j \in F_i} p_j.
\]

Although this objective does not explicitly impose procedural fairness constraints, we show that maximizing this objective results in procedurally fair policies. Throughout this work, we adopt the Nash formulation because it selects policies that satisfy additional theoretical properties described later. In particular, we first show that this objective is procedurally fair:

\begin{theorem}
\label{thm:dsbnw-is-pf}
    A policy selected by maximizing the decision-share-based Nash welfare objective is a procedurally fair policy.
\end{theorem}

\begin{proof}
    Let $P^*=(p^*_1, \dots, p^*_K)$ be a policy that maximizes decision-share-based Nash welfare. Specifically, it maximizes the following objective: $\prod^N_{i=1} \sum_{k\in F_i} p_k$. Define $d_i = \beta_i(P) = \sum_{k \in F_i} p_k$. Therefore, the objective can be re-written as $\prod^N_{i=1} d_i$, which is equivalent to maximizing $L(P) =  \sum_{i=1}^N \log d_i$.

    We know a procedurally fair policy always exists: give each agent $1/N$ probability on any of their favourite arms. Therefore, the optimal Nash product is strictly positive, and $d_i>0$ for every agent $i$.

    For each arm $k$, define its marginal Nash score as $s_k = \sum_{i:k\in F_i}\frac{1}{d_i}$. Consider two arms $k$ and $\ell$, where $p^*_\ell > 0$ and $p^*_k > 0$. If $s_k > s_\ell$, then moving a sufficiently small amount of probability from arm $\ell$ to arm $k$ would increase the log Nash welfare. Specifically, choose an $\epsilon$ such that $p^*_\ell > \epsilon > 0$. The new policy, $P(\epsilon)$, can be defined as $p_k(\epsilon) = p^*_k + \epsilon$ and $p_\ell(\epsilon) = p^*_\ell - \epsilon$, and every other probability is unchanged. This remains a valid policy since $p^*_\ell - \epsilon \geq 0$ and $(p^*_\ell - \epsilon) + (p^*_k + \epsilon) = p^*_\ell + p^*_k$, meaning the total probability remains unchanged. Under this new policy, agent $i$'s decision share becomes $d_i(\epsilon) = d_i + \epsilon \mathbf{1}\{k \in F_i\} - \epsilon \mathbf{1} \{\ell \in F_i\}$. The derivative of $\log d_i(\epsilon)$ w.r.t $\epsilon$ at $\epsilon = 0$ is equal to $\frac{1}{d_i} (\mathbf{1} \{k \in F_i\} - \mathbf{1} \{\ell \in F_i\})$. When we sum over all agents, we get the derivative of $L(P(\epsilon))$ w.r.t $\epsilon$ at $\epsilon = 0$ to be $\sum_{i=1}^N \frac{\mathbf{1} \{k \in F_i\}}{d_i} - \sum_{i=1}^N \frac{\mathbf{1} \{\ell \in F_i\}}{d_i}$. Since $s_k = \sum_{i:k \in F_i} \frac{1}{d_i}$, the derivative is exactly $s_k - s_\ell$. So if $s_k > s_\ell$, that means that the derivative of $L(P(\epsilon))$ w.r.t $\epsilon$ at $\epsilon = 0$ is greater than zero, which means that $L(P(\epsilon)) > L(P^*)$, which is a contradiction. Therefore, whenever $p^*_\ell > 0$, $s_k \leq s_\ell$. Moreover, since $k$ and $\ell$ both receive positive probability, applying that same result with the roles flipped gives $s_\ell \leq s_k$. Therefore, $s_k = s_\ell$, meaning every arm that has positive probability must have the same value.

    Let that value be denoted as $\lambda$. Then:
    {\allowdisplaybreaks
    \begin{align*}
    \lambda
    &= s_k \\
    &= s_k \sum_{a=1}^K p_a^*
        \qquad \text{(recall $\sum_{a=1}^K p_a^* = 1$)}\\
    &= \sum_{a=1}^K p_a^* s_a\\
    &= \sum_{a=1}^K
        p_a^*
        \sum_{i:a\in F_i}\frac{1}{d_i}\\
    &= \sum_{\substack{i,a\\ a \in F_i}}
        \frac{p_a^*}{d_i}\\
    &= \sum_{i=1}^N
        \sum_{a \in F_i}\frac{p_a^*}{d_i}\\
    &= \sum_{i=1}^N
        \frac{1}{d_i}
        \sum_{a \in F_i}p_a^*\\
    &= \sum_{i=1}^N\frac{d_i}{d_i}\\
    &= N.
    \end{align*}
    }

    Therefore, $s_k = N$ whenever $p^*_k > 0$.

    Now we define the individual allocation:
    \[
    y_{i,k}
    =
    \begin{cases}
    \dfrac{p_k^*}{N d_i}, & k\in F_i,\\[6pt]
    0, & k\notin F_i.
    \end{cases}
    \]

    For each agent $i$, we know that
    \[
    \begin{aligned}
    \sum_{k=1}^K y_{i,k}
    &=
    \frac{1}{N d_i}
    \sum_{k\in F_i}p_k^*\\
    &=
    \frac{d_i}{N d_i}\\
    &=
    \frac1N.
    \end{aligned}
    \]
    Therefore, every agent controls exactly $1/N$ of the policy mass, and by definition of $y_{i,k}$, only assigns it to their favourite arms.

    For each arm $k$ with $p^*_k > 0$,

    \[
    \begin{aligned}
    \sum_{i=1}^N y_{i,k}
    &=
    \frac{p_k^*}{N}
    \sum_{i:k\in F_i}\frac{1}{d_i}\\
    &=
    \frac{p_k^*}{N}s_k\\
    &=
    \frac{p_k^*}{N}N\\
    &=
    p_k^*.
    \end{aligned}
    \]

    If $p^*_k = 0$, then $y_{i,k} = 0$ for every agent, so $\sum_i y_{i,k} = p^*_k = 0$. Therefore, the individual allocations aggregate exactly to $P^*$. Hence $P^*$ is a procedurally fair policy.
\end{proof}

\subsection{Learning Algorithm}

\begin{algorithm}[t]
\caption{Learning a Procedurally Fair Policy}
\label{alg:learn_pf}
\begin{algorithmic}[1]

\REQUIRE Rounds $T$, number of agents $N$, number of arms $K$,
exploration decay $\gamma$, confidence multiplier $\alpha \geq 1$

\STATE Initialize
$P^0_{\PF} \gets \frac{1}{K}\mathbf{1}_K$

\FOR{$k = 1$ to $K$}
    \STATE Pull arm $k$ and observe $(r_i^k)_{i=1}^N$
\ENDFOR
\STATE Set $n_k^{K+1} \gets 1$, $\hat{\mu}_{i,k}^{K+1} \gets r_i^k$ for all $k \in [K]$ and $i \in [N]$

\FOR{$t = K+1$ to $T$}
    \STATE Draw $E_t \sim \mathrm{Bernoulli}\!\left(t^{-(1-\gamma)}\right)$
    \STATE $z_k^t \gets \sqrt{\frac{2\ln(NKt)}{n_k^{t}}}$ for all $k \in [K]$

    \FOR{each agent $i \in [N]$}
        \STATE $j_i^t \gets \arg\max_{j\in[K]} \hat{\mu}_{i,j}^{t}$
        \STATE $\hat{F}_i^t \gets \left\{k \in [K]:
        \hat{\mu}_{i,k}^{t}+\alpha z_k^t
        \geq
        \hat{\mu}_{i,j_i^t}^{t}-\alpha z_{j_i^t}^t\right\}$
    \ENDFOR

    \STATE $P^t_\PF \gets \arg\max_{p\in\Delta^K} \prod_{i=1}^{N}\sum_{k\in\hat{F}_i^t}p_k$

    \IF{$E_t = 1$}
        \STATE Sample $\tilde{k}_t$ uniformly from $[K]$ and set $P_t \gets e_{\tilde{k}_t}$
    \ELSE
        \STATE $P_t \gets P^t_\PF$
    \ENDIF

    \STATE Sample $k_t \sim P_t$, pull arm $k_t$, and observe $(r_i^t)_{i=1}^N$
    \STATE $n_{k_t}^{t+1} \gets n_{k_t}^{t}+1$
    \STATE $\hat{\mu}_{i,k_t}^{t+1} \gets \hat{\mu}_{i,k_t}^{t}
    +\frac{r_i^t-\hat{\mu}_{i,k_t}^{t}}{n_{k_t}^{t+1}}$
    for all $i \in [N]$
    \STATE $n_{k}^{t+1} \gets n_{k}^{t}$, $\hat{\mu}_{i,k}^{t+1} \gets \hat{\mu}_{i,k}^{t} \; \forall k \neq k_t, \; \forall i \in [N]$
\ENDFOR

\RETURN $P_{\PF}$

\end{algorithmic}
\end{algorithm}

We now introduce our algorithm for learning a procedurally fair policy, Algorithm~\ref{alg:learn_pf}. When learning the optimal policy, the favourite set is derived using UCB-style concentration bounds. For each agent, we compare every arm's upper confidence bound with the lower confidence bound of the empirically best arm. Any arm whose UCB overlaps this lower bound remains in the favourite set. To guarantee convergence, we must ensure that these intervals shrink over time, as we need the intervals to converge to 0 to recover the true favourite set. To solve this problem, we borrow a common mechanism from the literature \citep{auernonstochastic} and select an arm at random with probability $t^{-(1-\gamma)}$, where $\gamma \in (0, 1)$ is a decay parameter. This guarantees that every arm is pulled sufficiently often so that the confidence radius vanishes as $t \rightarrow \infty$.

\subsection{Regret Analysis}

Learning a procedurally fair policy requires identifying each agent's favourite arms to enforce equal influence. The correctness of procedural fairness depends on identifying each agent's favourite set and enforcing equal influence in the resulting policy. For this reason, we define PF regret in terms of the number of mismatch rounds in favourite-set recovery, which is analogous to a mistake-bound criterion.

\noindent
\textbf{Procedural Fairness Regret.}
$R^{\PF}(T) := K + \sum_{t=K+1}^T \mathbf{1}\{\exists i \in [N] : \hat{F}_i^t \neq F_i\}$, counting mismatch rounds between the estimated candidate sets $\hat{F}_i^t$ and the true favourite-arm sets $F_i$. For the first $K$ rounds we do forced exploration, so we define regret for these rounds to be $K$.

The analysis has three steps. We first establish a uniform concentration event, then lower-bound the number of samples obtained for every arm through randomized exploration, and finally bound the time required to recover every agent’s true favourite set.

We first establish that, at any fixed horizon, every empirical agent-arm mean is simultaneously close to its true value, providing the basic concentration guarantee needed to reason about favourite-set recovery.

\begin{lemma}[Hoeffding Mean Concentration]
\label{lm:meanconc}
Let $z^\ell_k = \sqrt{\frac{2\ln{(NKt)}}{n^\ell_k}}$, where $n^\ell_k$ represents the number of times arm $k$ has been pulled by time step $\ell$. Then, with probability at least $1 - \frac{2}{(NKt)^3}$, we have that $\forall i \in [N], k \in [K], \ell \in [K+1,...,t]: |\hat{\mu}^\ell_{i,k} - \mu^*_{i,k}| \leq z^\ell_k$
\end{lemma}
\begin{proof}
    This proof is nearly identical to an existing proof \cite{hossain2021fair}. Consider a fixed time step $t$. Let $X_{i, k}^m$ denote the rewards received by agent $i$ from the first $m \in [t]$ pulls of arm $k$. Further, let $z^m_k = \sqrt{\frac{2\ln{(NKt)}}{m}}$. Let $\bar{\mu}^m_{i,k} = 1/m \sum_{x \in X^m_{i,k}} x$ be the average reward received by agent $i$ from the first $m$ pulls of arm $k$. Recall that each reward $x$ is in $[0, 1]$. Then, by Hoeffding's inequality followed by a union bound, we have:

    \[
    \begin{aligned}
        \forall i \in [N],\, k \in [K],\, m \in [t]: P\!\left(\Big|\sum_{x \in X^m_{i,k}} x - m \mu^*_{i,k}\Big| > m z^m_k\right)
        &= P\left(\big| m \bar{\mu}^m_{i,k} - m \mu^*_{i,k}\big| > m z^m_k\right) \\[4pt]
        &= P\left( \big| \bar{\mu}^m_{i,k} - \mu^*_{i,k} \big| > z^m_k \right) \\[4pt]
        &\leq 2\exp\left( -\frac{2(mz^m_k)^2}{m} \right) \\[4pt]
        &= 2\exp\left(-2m(z^m_k)^2\right) \\[4pt]
        &= 2\exp\left(-2m\left(\sqrt{\frac{2\ln(NKt)}{m}}\right)^2\right) \\[4pt]
        &= 2\exp\left(-4\ln(NKt)\right) \\[4pt]
        &= \frac{2}{(NKt)^4}
    \end{aligned}
    \]

    By the union bound, we get
    \[
    \begin{aligned}
    \Pr\!\Big[\exists\, i\in[N],\,k\in[K],\,m\in[t]:\
    \bigl|\bar{\mu}^m_{i,k}-\mu^*_{i,k}\bigr| > z^m_k\Big]
    &\le \sum_{i\in[N]}\sum_{k\in[K]}\sum_{m\in[t]} \frac{2}{(NKt)^4} \\[4pt]
    &= \frac{2\,N K t}{(NKt)^4} \\[4pt]
    &= \frac{2}{(NKt)^3}
    \end{aligned}
    \]

    Taking the complement, we get
    \[
    \Pr\!\Big[\forall\, i\in[N],\,k\in[K],\,m\in[K+1,...,t]:\
    \bigl|\bar{\mu}^m_{i,k}-\mu^*_{i,k}\bigr| \le z^m_k\Big]
    \;\ge\; 1 - \frac{2}{(NKt)^3}
    \]

    Since for $\ell \in [K+1,...,t]$, we know that $1 \leq n^\ell_k \leq \ell - 1 \leq t$ and the above holds for every possible $m \in [1,...,t]$, it must also hold for the particular pull count $n^\ell_k = m$. Therefore, our above inequality holds for $z^\ell_k = \sqrt{\frac{2\ln(NKt)}{n^\ell_k}}$.

\end{proof}

We then extend this fixed-horizon guarantee to a single high-probability event that holds uniformly over all rounds, allowing the subsequent analysis to treat the confidence intervals as valid throughout the entire run.

\begin{corollary}[Uniform-in-time Hoeffding Mean Concentration]
\label{cor:meanconc-uniform}
Fix a horizon $T$ and let $z^\ell_k(t) = \sqrt{\frac{2\ln(NKt)}{n^\ell_k}}$.
Let $\mathcal{E}_T$ denote the event that
$|\hat{\mu}^\ell_{i,k}-\mu^*_{i,k}|\le z^\ell_k(t)$ for every
$t\in\{K+1,\dots,T\}$, $i\in[N]$, $k\in[K]$, and $\ell\le t$. Then
\[
\Pr(\mathcal{E}_T) \ge 1-\frac{3}{(NK)^3}.
\]
\begin{proof}
    By Lemma \ref{lm:meanconc} and union-bounding over $t=K+1,\dots,T$,
    \[
    \begin{aligned}
    \Pr(\mathcal{E}_T)
    &\ge 1-\frac{2}{(NK)^3}\sum_{t=K+1}^{T}\frac{1}{t^3}\\
    &\ge 1-\frac{2}{(NK)^3}\sum_{t=1}^{T}\frac{1}{t^3}.
    \end{aligned}
    \]
    Applying the integral bound for decreasing functions,
    \[
    \sum_{t=1}^{T}\frac{1}{t^3} \;\le\; 1+\int_{1}^{T} x^{-3}\,dx
    = \tfrac{3}{2}-\tfrac{1}{2T^{2}},
    \]
    \[
    \begin{aligned}
    \Pr(\mathcal{E}_T)
    &\ge 1-\frac{2}{(NK)^3}
    \Bigl(\tfrac{3}{2}-\tfrac{1}{2T^{2}}\Bigr)\\
    &\ge 1-\frac{3}{(NK)^3}.
    \end{aligned}
    \]
\end{proof}
\end{corollary}

Next, we show that randomized exploration occurs often enough to support learning but only sublinearly many times, so its direct contribution to regret remains controlled.

\begin{lemma}
\label{lm:randbound}
    Let $E_t$ be independent Bernoulli random variables with $\Pr(E_t = 1) = q_t = t^{-(1-\gamma)}$, and let $S_T = \sum_{t=K+1}^{T}E_t$. Then, with high probability, $S_T = O\left(\frac{T^\gamma}{\gamma}\right)$, and since each exploration round can only incur at most one favourite-set mismatch, the number of mismatches occurring on exploration rounds is also $O(\frac1\gamma T^\gamma)$.
\end{lemma}

\begin{proof}
    By definition, we have that $S_T = \sum_{t=K+1}^{T}E_t$. Because $E_t$ is Bernoulli with mean $q_t$, we know that $\mathbb{E}[E_t] = q_t$. By linearity of expectation, we have:

    \[
    \mathbb E[S_T] = \mathbb E\left[\sum_{t=K+1}^{T}E_t\right] =
    \sum_{t=K+1}^{T}\mathbb E[E_t] = \sum_{t=K+1}^{T}q_t
    \]

    Since $q_t = t^{-(1-\gamma)}$, we have that $\mathbb E[S_T] = \sum_{t=K+1}^T t^{-(1-\gamma)}$. We can find the bound by taking the integral:

    \[
    \int_{K+1}^{T+1} x^{-(1-\gamma)}\,dx
    \leq
    \sum_{t=K+1}^T t^{-(1-\gamma)}
    \leq
    \int_K^T x^{-(1-\gamma)}\,dx.
    \]

    This gives:

    \[
    \frac{(T+1)^\gamma - (K+1)^\gamma}{\gamma}
    \leq
    \sum_{t=K+1}^T t^{\gamma-1}
    \leq
    \frac{T^\gamma - K^\gamma}{\gamma}
    \]

    This gives the expected number of exploration rounds as $\mathbb E[S_T] = O(\frac1\gamma T^\gamma)$. We now want to bound $S_T$ itself.

    We know from a Chernoff bound that $\Pr(S_T \geq 2 \mathbb E[S_T]) \leq e^{- \mathbb E[S_T]/3}$. Therefore, with probability at least $1-e^{-\mathbb E[S_T]/3}$, we have that $S_T \leq 2 \mathbb E[S_T]$. Since we know that $\mathbb E[S_T] = O(\frac1\gamma T^\gamma)$, we know that $S_T = O(\frac1\gamma T^\gamma)$.

\end{proof}

We then prove that randomized exploration gives every arm sufficiently many samples, which forces all confidence radii to shrink at a quantifiable rate over time.

\begin{lemma}
\label{lm:ucbdecay}
Fix a horizon $T$ and a deterministic $K+1 \leq \tau \leq T$. At each round $s \in \{K+1,...,T\}$, the procedural fairness algorithm explores with probability $p_{\mathrm{rand}}^s=s^{-(1-\gamma)}$, with $0 < \gamma < 1$, and selects an arm uniformly at random. Define $m_\tau := 1+\frac{1}{K}\sum_{s=K+1}^{\tau-1}p_{\mathrm{rand}}^s = 1+\frac{1}{K}\sum_{s=K+1}^{\tau-1}s^{-(1-\gamma)}.$ Then:

\begin{enumerate}
    \item For every arm $k$, $\mathbb{E}[n^\tau_k] \geq m_\tau$.
    \item With probability at least $1 - K\exp\left(-\frac{m_\tau}{8}\right)$, $n^t_k \geq m_\tau/2$ for every arm $k$ and every $t \in \{\tau,...,T\}$.
    \item On the same event, for every arm $k$ and every $t \in \{\tau,...,T\}$ simultaneously:
    \[
        z^t_k = \sqrt{\frac{2\ln(NKt)}{n_k^t}}
        \le \sqrt{\frac{4\ln(NKT)}{m_\tau}}
        \le \sqrt{\frac{4\gamma K\ln(NKT)} {\tau^\gamma-1}}.
    \]
\end{enumerate}
\end{lemma}
\begin{proof}

Let $Q_{s,k} = \mathbf{1}\{\text{round $s$ is exploratory and selects arm $k$}\}$, which are independent Bernoulli random variables for each arm $k$. Let $Y^\tau_k = 1+\sum_{s=K+1}^{\tau-1} Q_{s,k}$, where the leading $1$ represents the forced initialization pull of arm $k$. We also know that $\Pr(Q_{s,k} = 1) = p^s_{\mathrm{rand}}/K$. Therefore,

\[
\mathbb E[Y_k^\tau]
= 1+\sum_{s=K+1}^{\tau-1}\mathbb E[Q_{s,k}]
= 1+\frac{1}{K}\sum_{s=K+1}^{\tau-1}p_{\mathrm{rand}}^s
= m_\tau.
\]

Because $n^\tau_k$ counts all pulls of arm $k$ available by timestep $\tau$, including the forced initialization pull and exploitation pulls, $n^\tau_k \geq Y^\tau_k$. Consequently, $\mathbb{E}[n^\tau_k] \geq \mathbb E[Y^\tau_k] = m_\tau$, proving the first claim.

Since the leading $1$ in $Y^\tau_k$ can be viewed as a Bernoulli random variable with success probability $1$, a Chernoff bound gives:

\[
\Pr \left(Y^\tau_k < \frac12 \mathbb{E}[Y^\tau_k]\right) = \Pr \left(Y^\tau_k < \frac{m_\tau}{2}\right) \leq \exp\left(-\frac{m_\tau}{8}\right)
\]

Since $n_k^\tau\ge Y_k^\tau$,
\[
\Pr\left(
n_k^\tau<\frac{m_\tau}{2}
\right)
\le
\exp\left(-\frac{m_\tau}{8}\right).
\]

Taking a union bound over the $K$ arms gives
\[
\Pr\left(
\forall k\in[K]:
n_k^\tau\ge\frac{m_\tau}{2}
\right)
\ge
1-K\exp\left(-\frac{m_\tau}{8}\right).
\]

On this event, arm counts cannot decrease, so for every $t \in \{\tau,...,T\}$ we know that $n^t_k \geq n^\tau_k \geq m_\tau / 2$, proving the second claim.

Furthermore, for every $\tau \leq t \leq T$, we know that $\ln(NKt) \leq \ln(NKT)$. Combining with the preceding count gives us:
\[
z_k^t = \sqrt{\frac{2\ln(NKt)}{n_k^t}} \le \sqrt{ \frac{2\ln(NKT)}{m_\tau/2}} = \sqrt{\frac{4\ln(NKT)}{m_\tau}}.
\]

Because $s^{-(1-\gamma)} \leq 1$ for every $s \geq 1$, we know that:

\[
\begin{aligned}
Km_\tau
&=
K+\sum_{s=K+1}^{\tau-1}s^{-(1-\gamma)}
\\
&\geq
\sum_{s=1}^{\tau-1}s^{-(1-\gamma)}
\\
&\geq
\int_1^\tau x^{-(1-\gamma)}\,dx
\\
&=
\frac{\tau^\gamma-1}{\gamma}.
\end{aligned}
\]

Therefore, $m_\tau \geq \frac{\tau^\gamma-1}{\gamma K}$, so:

\[
\begin{aligned}
z_k^t
&\le
\sqrt{
\frac{4\ln(NKT)}
{(\tau^\gamma-1)/(\gamma K)}
}
\\
&=
\sqrt{
\frac{4\gamma K\ln(NKT)}
{\tau^\gamma-1}
}.
\end{aligned}
\]

This proves the third and final claim.

\end{proof}

Finally, we use the shrinking confidence radii to show that, once they become small relative to the minimum reward gap, every agent’s true favourite set is recovered and exploitation rounds no longer incur mismatches.

\begin{lemma}
\label{lm:mismatchbound}
Define a mismatch to refer to a step where the actual favourite arm set does not equal the estimated favourite arm set for at least one agent. With high probability, in the exploitation phase, the number of mismatches is bounded by $O\left(\left[\frac{(1+\alpha)^2\gamma K \ln{(NKT)}}{\Delta_{\min}^2}\right]^{\frac{1}{\gamma}}\right)$.
\end{lemma}
\begin{proof}
    Let $F_i$ represent agent $i$'s actual favourite arms set, and let $j^* \in F_i$ and $k \notin F_i$, and $\Delta = \Delta_{i, j^*, k} > 0$, where $\Delta_{i, j^*, k} = \mu^*_{i,j^*} - \mu^*_{i,k}$ (in the case where every arm is a favourite arm for agent $i$, the problem is trivial for that agent. So we are only considering cases where agent $i$ has a non-favourite arm). Let $j \in \arg \max_j \hat{\mu}^t_{i,j}$ be an estimated favourite arm for agent $i$. To bound the regret for the exploitation phase, we must prove two things. First, that no true favourite arm will ever be excluded from the estimated favourite set, and second, we must bound the number of steps until a non-favourite arm is excluded from the set.

    Take some true favourite arm $j^* \in F_i$, and let $\mu^*_i$ be the true value of $i$'s favourite arms, or $\mu^*_i = \max_\ell \mu^*_{i,\ell}$. Using Corollary \ref{cor:meanconc-uniform}, we know that $\hat{\mu}^t_{i,j^*} \geq \mu^*_i - z^t_{j^*}$ with high probability for all $t \in \{K+1,...,T\}$. Since $\alpha \geq 1$, we have that $\hat\mu_{i,j^*}^t+\alpha z_{j^*}^t \geq \mu_i^*+(\alpha-1)z_{j^*}^t \geq \mu_i^*$. Then we have the empirically best arm, $j \in \arg \max_j \hat{\mu}^t_{i,j}$, where on the clean event with high probability we have $\hat{\mu}^t_{i,j} \leq \mu^*_{i,j} + z^t_j$ for all $t \in \{K+1,...,T\}$, which means $\hat\mu_{i,j}^t-\alpha z_j^t \leq \mu_{i,j}^*-(\alpha-1)z_j^t \leq \mu_i^*$. Therefore, we have that:

    \[
    \hat\mu_{i,j^*}^t+\alpha z_{j^*}^t \geq \mu^*_i \geq  \hat\mu_{i,j}^t-\alpha z_j^t
    \]

    This is the exact condition for $j^*$ to remain in the estimated favourite set. Therefore, $F_i$ is a subset of the estimated favourite set for all time steps $t \in \{K+1,...,T\}$ with high probability.

    Now we must bound the number of time steps it takes for a non-favourite arm to be excluded from the estimated favourite arm set. In order for an arm $k$ to remain in the favourite arm set, we must satisfy the following condition: $\hat{\mu}^t_{i,k} + \alpha z^t_k \geq \hat{\mu}^t_{i,j} - \alpha z^t_j$ (note that $j^*$ may or may not be the same arm as $j$). Thus, we want to find at what time step $t$, the following will be true: $\hat{\mu}^t_{i,k} + \alpha z^t_k < \hat{\mu}^t_{i,j} - \alpha z^t_j$.

    We know, by definition, that $\hat{\mu}^t_{i,j} \geq \hat{\mu}^t_{i,j^*}$, as $j$ represents the arm that has the empirically highest mean at time $t$, so the estimate is at least as high as $j^*$'s estimate. This provides a lower bound to the right-hand side of our equation. We can then use the more conservative exclusion inequality $\hat{\mu}^t_{i,k} + \alpha z^t_k < \hat{\mu}^t_{i,j^*} - \alpha \max\{ z_j^t , z_{j^*}^t \} $. We can replace our other inequality with more conservative inequalities, as we are simply looking for an upper bound. If we replace our inequality with a more conservative inequality, then we know that if the conservative inequality is satisfied, then the original inequality is satisfied.

    From Corollary \ref{cor:meanconc-uniform}, we know with high probability that $\hat{\mu}^t_{i,k} \leq \mu^*_{i,k} + z^t_k$, providing an upper bound to the left-hand side of our inequality. We can then use the new, more conservative exclusion inequality $\mu^*_{i,k} + z^t_k + \alpha z^t_k < \hat{\mu}^t_{i,j^*} - \alpha \max\{ z_j^t , z_{j^*}^t \} $. By the same corollary, we also know that $\hat{\mu}^t_{i,j^*} \geq \mu^*_{i,j^*} - z^t_{j^*} \geq \mu^*_{i, j^*} - \max\{ z_j^t , z_{j^*}^t \} $. Similarly, we have a new lower bound for our right-hand side, so we can replace our exclusion inequality with $\mu^*_{i,k} + z^t_k + \alpha z^t_k < \mu^*_{i,j^*} - \max\{ z_j^t , z_{j^*}^t \} - \alpha \max\{ z_j^t , z_{j^*}^t \} $. Rearranging and factoring gives: $\Delta_{i,j^*,k} > (1+\alpha)(z^t_k + \max\{ z_j^t , z_{j^*}^t \})$.

    For our bound, we are primarily concerned with the smallest $\Delta$, so we use
    \[
    \Delta_{\min}
    :=
    \min_{i:\, |F_i| < K}
    \min_{j \in F_i}
    \min_{k \notin F_i}
    (\mu_{i,j}^* - \mu^*_{i,k}) > 0.
    \]
    Our inequality becomes
    \[
    \Delta_{\min}
    >
    (1+\alpha)\bigl(z^t_k + \max\{z_j^t,z_{j^*}^t\}\bigr).
    \]

    For a time step $K+1 \leq \tau \leq T$, recall from Lemma \ref{lm:ucbdecay} that
    \[
    m_\tau
    =
    1+\frac{1}{K}\sum_{s=K+1}^{\tau-1}s^{\gamma-1},
    \qquad
    r_\tau
    =
    \sqrt{\frac{4\ln(NKT)}{m_\tau}}.
    \]
    From Lemma \ref{lm:ucbdecay}, for every arm $\ell \in [K]$ and every time step $t \in \{\tau,\dots,T\}$, we have $z^t_\ell \leq r_\tau$. This applies in particular to $k$, $j$, and $j^*$. Hence
    \[
    z^t_k + \max\{z^t_j,z^t_{j^*}\}
    \leq 2r_\tau,
    \]
    so it is sufficient to ensure that
    \[
    2(1+\alpha)r_\tau < \Delta_{\min}.
    \]

    Substituting $r_\tau$ with its definition, we have $2(1+\alpha) \sqrt{4\ln(NKT)/m_\tau} < \Delta_{\min}$. Solving for $m_\tau$, we get:

    \[
    m_\tau > \frac{16 (1+\alpha)^2 \ln(NKT)}{\Delta_{\min}^2}
    \]

    Recall from Lemma \ref{lm:ucbdecay}, we have that $m_\tau \geq \frac{\tau^\gamma - 1}{\gamma K}$. So our inequality is guaranteed to hold whenever

    \[
    \frac{\tau^\gamma-1}{\gamma K} > \frac{16(1+\alpha)^2\ln(NKT)}{\Delta_{\min}^2}.
    \]

    Rearranging, we get:

    \[\tau
    >
    \left[
    1+
    \frac{16(1+\alpha)^2\gamma K\ln(NKT)}
    {\Delta_{\min}^2}
    \right]^{1/\gamma}
    \]

    Let $\tau_0$ be the smallest integer satisfying this inequality and $\tau_0 \geq K+1$. If $\tau_0 \leq T$, then on the intersection of the high probability arm-count event from Lemma \ref{lm:ucbdecay} and the high probability uniform concentration event from Corollary \ref{cor:meanconc-uniform}, every true favourite set is recovered for every agent for every time step from $\tau_0$ onwards, until the end of the horizon $T$. In other words, $\hat{F}_i^t = F_i$ for all agents $i \in [N]$ and for every $t \in \{\tau_0,...,T\}$. Since exploitation phase mismatches can only occur during time steps $t \in \{K+1,...,\tau_0-1\}$, their number is at most $\max\{0,\tau_0-K-1\}$. If $\tau_0 > T$, the number of exploitation phase mismatches is trivially at most $T-K < \tau_0$. Therefore, the final regret bound for the exploitation phase can be written as:

    \[
    O\left(\left[\frac{(1+\alpha)^2\gamma K \ln{(NKT)}}{\Delta_{\min}^2}\right]^{\frac{1}{\gamma}}\right)
    \]

\end{proof}

Tying together our previous results, we can now present our final bound.

\begin{theorem}
\label{thm:pf_regret}
Let $\alpha \geq 1$ and $\Delta_{\min} :=\min_{i:\, |F_i| < K}\ \min_{j \in F_i}\ \min_{k \notin F_i}\ \bigl(\mu^*_{i,j} - \mu^*_{i,k}\bigr)$, where $\Delta_{\min} = \infty$ when $|F_i| = K$ for every agent $i$. Then, with high probability, the regret bound for the Procedural Fairness algorithm is $O\left(K+\frac{T^\gamma}{\gamma} +\left(\frac{(1+\alpha)^2\gamma K \ln{(NKT)}}{\Delta_{\min}^2}\right)^{\frac{1}{\gamma}}\right)$.
\end{theorem}
\begin{proof}[Proof of Theorem~\ref{thm:pf_regret}]
    Let $F_i$ represent agent $i$'s true favourite-arm set under the true means $\mu^*$. At each adaptive round $t \in \{K+1,...,T\}$, the algorithm uses the estimates $\hat{\mu}^t$ and $z^t_k = \sqrt{\frac{2\ln{(NKt)}}{n^t_k}}$ for arm $k$, where $n^t_k$ represents the number of times arm $k$ has been pulled by time $t$.

    Additionally, for $j \in F_i$ and $k \notin F_i$, define the gap $\Delta_{i,j,k} := \mu^*_{i,j} - \mu^*_{i,k} > 0$, and let $\Delta_{\min} = \min_{i:\, |F_i| < K}\ \min_{j \in F_i}\ \min_{k \notin F_i} \Delta_{i,j,k} > 0$. Note $\Delta_{\min}$ is defined as $\infty$ if all agents are indifferent between all arms, that is, $|F_i| = K$ for all agents $i$, in which case there are no non-favourite arms to exclude, so we only need to consider situations where there exists some non-favourite arm for at least one agent. We also have $\hat{F}_i^t = \{k : \hat{\mu}^t_{i,k} + \alpha z^t_k \geq \hat{\mu}^t_{i,j} - \alpha z^t_j\}$, where $j = \arg \max_j \hat{\mu}^t_{i,j}$ for some agent $i$.

    In our setting, we define \emph{regret} to refer to the initialization cost and the number of subsequent mismatches. Specifically, we charge each of the first $K$ forced initialization rounds one unit of regret, and we count a \emph{mismatch} at time $t \in \{K+1,...,T\}$ if $\exists i : \hat{F}_i^t \neq F_i$. Let $M_t = 1$ for $t \in \{1,...,K\}$, and let $M_t = \mathbf{1}\{\exists i : \hat{F}_i^t \neq F_i\}$ for $t \in \{K+1,...,T\}$. Further, at each adaptive round $t \in \{K+1,...,T\}$, with probability $p^t_{\mathrm{rand}} = t^{-(1-\gamma)}$ the algorithm pulls an arm at random, which we call the exploration phase. Let $E_t \sim \text{Bernoulli}(t^{-(1-\gamma)})$ indicate whether or not $t$ was a round in which random exploration occurred. Thus, our regret bound can be defined as:

    \[
    \begin{aligned}
    R^{\PF}(T)
    &= \sum_{t=1}^T M_t
    = K+\sum_{t=K+1}^T M_t
    \\
    &\leq K+\underbrace{\sum_{t=K+1}^T E_t}_{(a) \; \text{exploration}}
    +\underbrace{\sum_{t=K+1}^T(1-E_t)M_t}_{(b) \; \text{exploitation}}.
    \end{aligned}
    \]

    From Lemma \ref{lm:randbound}, we have that part (a) has the bound $O(\frac{T^\gamma}{\gamma})$, and from Lemma \ref{lm:mismatchbound}, we have that part (b) has bound $O([\frac{(1+\alpha)^2\gamma K \ln{(NKT)}}{\Delta_{\min}^2}]^{\frac{1}{\gamma}})$. Taking a union bound over the high probability events in these lemmas, parts (a) and (b) hold simultaneously with high probability. Using part (a) and (b) together, we have our final regret bound:

    \[
    R^{\PF}(T) = O\left(K+\frac{T^\gamma}{\gamma} +\left(\frac{(1+\alpha)^2\gamma K \ln{(NKT)}}{\Delta_{\min}^2}\right)^{\frac{1}{\gamma}}\right)
    \]
\end{proof}

\section{Outcome-Based Comparison Framework}

While procedural fairness evaluates influence distribution, outcome-based objectives evaluate the resulting utility vector across two key dimensions: total utility and distribution. We isolate these dimensions using utilitarian welfare and inequality minimization as endpoints. Additionally, we benchmark against egalitarian welfare, Nash welfare, and generalized Gini welfare as established fairness objectives.

\subsection{Diagnostic Endpoints}


The tradeoff between utilitarianism and inequality is a foundational debate in economics \citep{okun, ATKINSON1970244, inequalityaversion}, political science \citep{Kenworthy_1995, ditributionandpolitics}, and fair division \citep{freeman2019equitableallocationsindivisiblegoods, Caragiannis2012}. Utilitarian welfare maximizes total expected utility, whereas inequality minimization reduces disparities between agents' expected utilities. These endpoints allow us to measure what procedural representation costs---or preserves---on each outcome dimension.

\subsubsection{Inequality Minimization}

Inequality minimization represents equal outcomes, where the policy aims to give each agent as close to equal expected rewards as possible, which is an objective inspired by equitability in the fair division literature \citep{freeman2019equitableallocationsindivisiblegoods, Caragiannis2012}. This principle is most useful in contexts where balance across agents matters more than efficiency or giving each agent equal voice. We define inequality minimization as follows:

\begin{definition}[Inequality Minimization]
    A policy \( P = (p_1, \dots, p_K) \) minimizes inequality if it minimizes differences in expected rewards among agents. Formally, the inequality-minimizing policy is
    \[
    P_{\IM}^*
    \in
    \arg\min_{Q \in \Delta^K}
    \frac{2}{N(N-1)}
    \sum_{i>j}
    \left(
    \sum_{k=1}^{K}
    (Q_k\mu^*_{i,k}-Q_k\mu^*_{j,k})
    \right)^2.
    \]
\end{definition}

We also define a way to measure inequality minimization, or the equality score:
\[
\IM(\mu^*,P)=1-\left|D(P)-D(P_{\IM}^*)\right|,
\]
where $P_{\IM}^*$ is an inequality-minimizing policy and
\[
D(P)
=
\frac{2}{N(N-1)}
\sum_{i>j}
\left(
\sum_k
(p_k\mu^*_{i,k}-p_k\mu^*_{j,k})
\right)^2.
\]
Please note that $D(\cdot) \in [0, 1]$, so $\IM(\mu^*, P) \in [0, 1]$.

\subsubsection{Utilitarianism}

Utilitarianism is the notion of maximizing the overall utility of the group. This principle is most appropriate in efficiency-driven domains where aggregate outcomes matter most.

\begin{definition}[Utilitarianism]
A policy \( P = (p_1, \dots, p_K) \) is utilitarian if it maximizes the sum of expected utilities across agents. Formally, the utilitarian policy is
\[
P_{\UT}^*
\in
\arg\max_{Q \in \Delta^K}
\sum_{i=1}^{N}\sum_{k=1}^{K}Q_k\mu^*_{i,k}.
\]
\end{definition}

The score for utilitarianism is
\[
\UT(\mu^*,P)
=
\frac{
\sum_{i\in[N]}\sum_{k\in[K]}p_k\mu^*_{i,k}
}{
\sum_{i\in[N]}\sum_{k\in[K]}(P_{\UT}^*)_k\mu^*_{i,k}
}.
\]
This provides a percentage share of what the policy is achieving with respect to what can be achieved. If the denominator is zero, then $\UT(\mu^*, P)=1$.

\subsubsection{Illustrative Comparison}
\label{sec:fair-policy-example}
Consider a multi-agent multi-armed bandit setting with $N = 3$ and $K = 2$, and the following reward structure:
\[\mu = \begin{bmatrix}1 & 0 \\ 1 & 0 \\ 0 & 1\end{bmatrix}\]

In this scenario, for procedural fairness, agents 1 and 2 would place their $\frac{1}{N}$ probability mass on the first arm, and agent 3 would place their probability mass on the second arm. This results in a policy of $(\frac{2}{3}, \frac{1}{3})$. An \textit{inequality-minimizing} policy would be $P = (\frac{1}{2}, \frac{1}{2})$, as each agent's expected utility from such a policy would be 0.5. A \textit{utilitarian} policy would be $P = (1, 0)$, as the leftmost arm has an overall utility among all agents of 2, whereas the rightmost arm has an overall utility of 1 among all agents.

\subsection{Secondary Benchmarks}

We additionally compare against three established objectives for selecting a policy: egalitarian welfare \citep{moulin}, Nash social welfare (from the NashUCB algorithm \citep{hossain2021fair}), and the generalized Gini index \citep{busa2017multi}. Their respective policy objectives are
\[
\begin{aligned}
P_{\EG}^*
&\in \arg\max_{Q\in\Delta^K}\min_{i\in[N]}U_i(Q),\\
P_{\NSW}^*
&\in \arg\max_{Q\in\Delta^K}\prod_{i=1}^{N}U_i(Q),\\
P_{\GGI}^*
&\in \arg\max_{Q\in\Delta^K}\sum_{i=1}^{N}w_iU_{(i)}(Q),
\end{aligned}
\]
where $U_{(1)}(Q)\leq\cdots\leq U_{(N)}(Q)$ and $w_1\geq\cdots\geq w_N\geq 0$. In this paper, we use $w_i = 2^{-(i-1)}$ for $i=1,\dots,N$. We include these objectives as representative approaches to fairness to compare with procedural fairness; however, we do not claim this to be an exhaustive taxonomy.

\section{Theoretical Results}

\subsection{Procedural Fairness and the Core}

The core is a stability notion originating in cooperative game theory~\cite{shapley1971cores}. In the context of public decision-making~\cite{fain2018fair}, it represents a distribution over alternatives (arms) that no coalition of agents $A\subseteq [N]$ would have an incentive to deviate from, given their proportional share of probability (\(|A|/N\)). We define the core in our setting in two ways, considering both utility (outcome core) and decision share (procedural core).

\begin{definition}[Outcome Core]
    A distribution $P \in \Delta^K$ is in the outcome core if there is no coalition of agents $A \subseteq [N]$ and distribution $Q \in \Delta^K$ such that $\frac{|A|}{N}U_i(Q) \geq U_i(P)$ for all $i \in A$, with at least one strict inequality.
\end{definition}

The procedural core adapts the classic notion of the core in cooperative game theory to the setting of procedural fairness. Rather than considering the agents' expected utility, we consider their decision share: the total probability mass assigned to the agents' most preferred arms. This reflects the purpose of procedural fairness, that representation is valuable independently of the reward produced by some policy. As a result, the procedural core means that no coalition can use its proportional share of policy mass to improve every member's representation, with at least one strict improvement.

\begin{definition}[Procedural Core]
     Recall that $\mu^*$ is the true reward-mean matrix. Let $F_i = \{k \in [K] \mid \mu^*_{i,k} = \max_{j\in [K]} \mu^*_{i, j}\}$ denote agent $i$'s favourite arms. Define a binary vector $X_i \in \{0, 1\}^K$ for each agent $i$, where $X_i[k]$ is 1 if $k \in F_i$ and 0 otherwise. Thus, given a policy $P$, the decision share of agent $i$ is defined as $\beta_i(P) = \sum_{k=1}^K X_i[k] p_k = P \cdot X_i$. As with the outcome core, a policy $P$ is in the core if there is no coalition of agents $A \subseteq [N]$ and distribution $Q \in \Delta^K$ such that $\frac{|A|}{N}\beta_i(Q) \geq \beta_i(P) \quad \forall i \in A$, with at least one strict inequality.
\end{definition}

The procedural core carries interesting implications for our setting, namely, that a utility-based Nash welfare-maximizing distribution does not necessarily lie in the procedural core.

\begin{theorem}
\label{thm:OutcomeNashNotInProceduralCore}
A utility-based Nash welfare-maximizing distribution need not lie in the procedural core.
\end{theorem}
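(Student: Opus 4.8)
The plan is to prove the statement by an explicit counterexample in the smallest nontrivial setting, $N=2$ agents and $K=2$ arms, where each agent has a single distinct favourite arm. The key simplification I would exploit first is what the procedural-core blocking condition says for a singleton coalition $A=\{i\}$: there the requirement $\frac{|A|}{N}\beta_i(P')\ge\beta_i(P)$ with at least one strict inequality reduces to finding a $P'$ with $\frac{1}{N}\beta_i(P')>\beta_i(P)$, and the best such deviation simply places all probability mass on agent $i$'s favourite arm, giving $\beta_i(P')=1$. Hence a single agent can block whenever $\beta_i(P)<\frac{1}{N}$. So it suffices to construct an instance in which the (unique) utility-based Nash-welfare maximiser assigns some agent strictly less than $1/N$ of the decision-share mass.

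Next I would write down a concrete reward matrix, e.g. $\mu^*_{1,1}=0.8,\ \mu^*_{1,2}=0.1$ and $\mu^*_{2,1}=0.1,\ \mu^*_{2,2}=0.2$, so that $F_1=\{1\}$, $F_2=\{2\}$ and all means lie strictly in $(0,1)$ as required. Parameterising the policy by $p:=p_1$ (so $p_2=1-p$), the agents' utilities become the affine functions $u_1(p)=0.1+0.7p$ and $u_2(p)=0.2-0.1p$, and the Nash-welfare objective $u_1(p)\,u_2(p)$ is a strictly concave downward parabola in $p$. Solving the first-order condition $\tfrac{d}{dp}\bigl[u_1 u_2\bigr]=0$ gives $0.13-0.14p=0$, hence the unique interior maximiser $p^*=13/14$, which I would check lies in $(0,1)$ (so it is the global optimum) and is unique by strict concavity.

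I would then read off the decision shares at this optimum: since $F_1=\{1\}$ and $F_2=\{2\}$ we have $\beta_1(P)=p^*=13/14$ and $\beta_2(P)=1-p^*=1/14<\tfrac12=\tfrac1N$. Agent $2$ therefore blocks: taking $A=\{2\}$ and $P'=(0,1)$ gives $\beta_2(P')=1$ and $\frac{|A|}{N}\beta_2(P')=\tfrac12>\tfrac{1}{14}=\beta_2(P)$, a strict improvement that exhibits a violating coalition. Hence the Nash-welfare maximiser is not in the procedural core, proving the claim. The underlying intuition, which I would state explicitly, is that Nash welfare balances \emph{cardinal utilities} rather than decision shares, so when one agent derives much larger reward from its favourite arm than the other does, the welfare-optimal policy loads almost all probability onto that arm and starves the second agent of procedural influence.

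The step requiring the most care is not the arithmetic but the two well-definedness checks that make the counterexample airtight: first, verifying that the chosen means induce exactly the claimed favourite sets (a tie would alter $F_i$ and the decision-share vectors), and second, confirming that the Nash-welfare maximiser is interior and unique, so that ``the'' maximiser is unambiguous and the theorem's ``need not'' is witnessed by a genuinely optimal policy rather than an artefact of a boundary solution or arbitrary tie-breaking. Both follow from the strict concavity of the one-dimensional objective together with the strict inequalities $\mu^*_{1,1}>\mu^*_{1,2}$ and $\mu^*_{2,2}>\mu^*_{2,1}$, but they are the conditions I would verify before declaring the coalition blocking.
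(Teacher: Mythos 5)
Your proof is correct and follows essentially the same route as the paper's: an explicit $2\times 2$ counterexample in which the unique utility-based Nash-welfare maximiser gives one agent a decision share below $1/N$, so that a singleton coalition blocks by moving all probability to its favourite arm. The only substantive difference is your choice of instance, which is arguably cleaner than the paper's (whose reward matrix uses the means $0$ and $1$, in tension with the stated assumption $0<\mu^*_{i,k}<1$, and whose maximiser sits on the boundary of the simplex): yours keeps all means strictly inside $(0,1)$ and yields an interior, strictly concave optimum, so the starved agent's share $1/14$ is positive yet still strictly below $1/2$.
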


\begin{proof}
We prove this through a counterexample. Consider a setting with two agents ($N=2$) and two arms ($K=2$). The reward matrix
\[
\mu \;=\;
\begin{pmatrix}
1 & 0.99 \\[6pt]
0 & 1
\end{pmatrix}
\]

Agent 1's favourite arm is arm~1, and Agent 2's favourite arm is arm~2. We will now find the distribution over arms that maximizes the utility-based Nash welfare.

Let $p_2$ be the probability of pulling arm~2 and $p_1 = 1 - p_2$ be the probability of pulling arm~1. Then:
\[
\begin{aligned}
& \text{Agent~1's expected reward: }   & U_1(p) &= 1 \cdot p_1 + 0.99 \cdot p_2 = p_1 + 0.99\,p_2, \\[4pt]
& \text{Agent~2's expected reward: }   & U_2(p) &= 0 \cdot p_1 + 1     \cdot p_2 = p_2.
\end{aligned}
\]
Utility-based Nash welfare (the product of expected rewards) is
\[
\Bigl(p_1 + 0.99\,p_2\Bigr)\;\cdot\;p_2
=\;
\bigl(p_1\bigr)\,p_2 + 0.99\,p_2^2.
\]
Since $p_1 = 1 - p_2$, this equals
\[
(1-p_2)p_2 + 0.99\,p_2^2
=
p_2 - p_2^2 + 0.99\,p_2^2
=
p_2 - 0.01\,p_2^2.
\]
Taking the derivative:
\[
\frac{d}{dp_2} \bigl(p_2 - 0.01\,p_2^2\bigr)
=\; 1 \;-\; 0.02\,p_2.
\]
On the interval $p_2 \in [0,1]$, this derivative never vanishes (it is $1 - 0.02\,p_2 > 0$ for all $p_2 \in [0,1]$). Hence the function is strictly increasing over $[0,1]$, with its maximum at the boundary $p_2 = 1$.

Thus the unique maximizer of utility-based Nash welfare is
\[
(p_1, p_2) \;=\; (0,\,1).
\]

We can now show that this distribution is not in the procedural core. Under \((0,1)\), the \emph{decision share} of an agent is the total probability on that agent's favourite arm(s). Thus:
\[
\beta_1\bigl((0,1)\bigr) = p_1 = 0
\]
\[
\beta_2\bigl((0,1)\bigr)
= p_2 = 1
\]
We can look at the single-agent coalition $C = \{\text{Agent }1\}$. By deviating to the distribution $(1,0)$ (which puts probability~1 on arm~1), Agent~1's procedural utility becomes
\[
\beta_1\bigl((1,0)\bigr) = 1.
\]
Since $\tfrac{|C|}{N} = \tfrac{1}{2}$, we scale this by $\tfrac12$ to obtain
\[
\tfrac{|C|}{N}\cdot \beta_1\bigl((1,0)\bigr)
=
\tfrac12 \times 1
=
0.5
\quad>\quad
0
=
\beta_1\bigl((0,1)\bigr).
\]
Hence Agent~1 alone can \emph{strictly} increase their procedural utility when switching from $(0,1)$ to $(1,0)$. By definition of the procedural core, \((0,1)\) is therefore \emph{blocked} and cannot be in the procedural core. Thus, the unique distribution maximizing utility-based Nash welfare in this instance, \((0,1)\), is not in the procedural core because a single-agent coalition has a profitable deviation. This shows that the utility-based Nash welfare maximizing solution is not guaranteed to be in the procedural core.
\end{proof}

Mirroring the known connection between Nash welfare and the core \citep{fain2016coreparticipatorybudgetingproblem}, we show this carries to our setting.

\begin{theorem}
\label{thm:RRimpliesPC}
A policy selected by maximizing the decision-share-based Nash welfare objective, 

\[
\prod_{i=1}^N \sum_{k \in F_i} p_k
\]

is in the procedural core.
\end{theorem}

\begin{proof}
Let $P^*$ be the selected policy and define each agent's decision share as $\beta_i(P) = \sum_{k \in F_i} p_k$. Assume, for sake of contradiction, that coalition $C$ blocks $P^*$ through some policy $Q$. Let $\alpha = |C|/N$.

By the definition of blocking, $\alpha \beta_i(Q) \geq \beta_i(P^*)$ for every $i \in C$ with at least one strict inequality. Equivalently,

\[
\frac{\beta_i(Q)}{\beta_i(P^*)} > \frac{1}{\alpha} = \frac{N}{|C|}
\]

for at least one $i \in C$. Note that $\beta_i(P^*)$ cannot be 0, because it is the policy selected to maximize the decision-share-based Nash welfare. Therefore,

\[
\sum_{i\in C}
\frac{\beta_i(Q)}{\beta_i(P^*)}
>
|C|\frac{N}{|C|}
=
N
\]

Now consider moving a small amount, $\epsilon$, from $P^*$ towards $Q$. In other words: $P_\epsilon = (1-\epsilon) P^* + \epsilon Q$. Because each $\beta_i$ is linear, we have that: $\beta_i(P_\varepsilon) = (1-\varepsilon)\beta_i(P^*) + \varepsilon\beta_i(Q).$

So we can take the derivative of the objective function, $\prod_{i=1}^N \sum_{k \in F_i} p_k = \prod_{i=1}^N \beta_i(P)$ w.r.t $\epsilon$ at $\epsilon = 0$. If the derivative is positive, that means that there exists some small $\epsilon$ such that the value of the objective function can increase. A policy that maximizes the objective function should have a derivative w.r.t $\epsilon$ at $\epsilon = 0$ equal to 0 (or less than 0 if negative $\epsilon$ would make it an infeasible solution). Note that maximizing $\prod_{i=1}^N \beta_i(P)$ is equivalent to maximizing $\sum_{i=1}^N \log \beta_i(P)$, so we use this in place of the product. Therefore,
\[
\frac{d}{d\varepsilon}
\sum_{i=1}^{N}\log\beta_i(P_\varepsilon)
=
\sum_{i=1}^{N}
\frac{\beta_i(Q)-\beta_i(P^*)}
{(1-\varepsilon)\beta_i(P^*)+\varepsilon\beta_i(Q)}.
\]
Plugging in $\varepsilon=0$, we obtain
\[
\sum_{i=1}^{N}
\frac{\beta_i(Q)-\beta_i(P^*)}{\beta_i(P^*)}
=
\sum_{i=1}^{N}
\left(
\frac{\beta_i(Q)}{\beta_i(P^*)}-1
\right)
=
\left( \sum_{i=1}^{N}
\frac{\beta_i(Q)}{\beta_i(P^*)} \right)
-
N.
\]

From earlier, we have $\sum_{i\in C} \frac{\beta_i(Q)}{\beta_i(P^*)} > N.$ Since every term in the sum is non-negative (no $\beta_i$ can be less than 0 under any policy, as $p_k$ must always be at least 0 under any policy for any arm), it follows that $\sum_{i=1}^{N} \frac{\beta_i(Q)}{\beta_i(P^*)} > N.$ This means that the derivative at $\varepsilon=0$ is strictly positive, and for a sufficiently small positive $\varepsilon$, $P_\varepsilon$ has strictly greater decision-share-based Nash welfare than $P^*$. This
contradicts the global optimality of $P^*$. Hence, no coalition blocks $P^*$,
and $P^*$ lies in the procedural core.

\end{proof}

We also show that procedural fairness requires an exact structural characterization as coalitional proportionality:

\begin{lemma}
    \label{lm:pfcoalprop}
    Let $S \subseteq [N]$, $U_S = \bigcup_{i \in S} F_i$, and $P(U_S) = \sum_{k \in U_S} p_k$. Then, $P$ is a procedurally fair policy if and only if $P(U_S) \geq \frac{|S|}{N}$ for every $S \subseteq [N]$.
\end{lemma}

\begin{proof}
    We first prove the forward direction: if $P$ is procedurally fair, then $P(U_S) \geq |S|/N$ for every $S \subseteq [N]$.

    Suppose $P$ is procedurally fair. Then there exists a non-negative allocation $y_{i,k}$ that satisfies $\sum_{k=1}^K y_{i,k} = 1/N$ for every agent $i$, and $y_{i,k} = 0$ when $k \notin F_i$, and $\sum_{i=1}^N y_{i,k} = p_k$ for every arm $k$. Consider any coalition $S \subseteq [N]$. The members of $S$ collectively contribute $\sum_{i \in S} \sum_{k=1}^K y_{i,k} = \sum_{i \in S} 1/N = |S|/N$.

    Each agent $i \in S$ can only allocate probability to arms in $F_i$ (since $P$ is procedurally fair), so the probability contributed by coalition $S$ to arms in $U_S$ is therefore $\sum_{i \in S} \sum_{k \in U_S} y_{i,k} = |S|/N$. Furthermore, the coalition's contribution to $U_S$ cannot exceed the total contribution of all agents to $U_S$ (an agent may have an arm in $U_S$ that is its favourite, but it is not in $S$). Therefore,

    \[
    \sum_{i \in S} \sum_{k \in U_S} y_{i,k} \leq \sum_{i = 1}^N \sum_{k \in U_S} y_{i,k}
    \]

    Furthermore, we also require through procedural fairness that $\sum_{i=1}^N y_{i,k} = p_k$. Therefore,

    \[
    \sum_{i=1}^N \sum_{k \in U_S} y_{i,k} = \sum_{k \in U_S} p_k = P(U_S) \geq \sum_{i \in S} \sum_{k \in U_S} y_{i,k} = \frac{|S|}{N}
    \]

    Meaning $P(U_S) \geq |S|/N$, so the forward condition is proven true.

    Now we prove the reverse. Assume $P(U_S) \geq |S|/N$ for every coalition $S \subseteq [N]$. We must now show that this means that $P$ is procedurally fair. We do this using a flow network.

    Create a directed network with a source node $s$, one node for every agent $i$, one node for every arm $k$, and a sink node $t$. We add an edge from $s$ to each agent $i$ with capacity $1/N$, an edge from each agent $i$ to arm $k \in F_i$ with capacity 1, and an edge from each arm $k$ to $t$ with capacity $p_k$. We will prove that this network admits a flow of value 1. By the max-flow/min-cut theorem, it is enough to prove that every $s-t$ cut has capacity at least 1.

    Consider an arbitrary cut. Let $A$ be the agents on the source side of the cut, and $B$ be the arms on the source side of the cut. We consider two cases. In the first case, an edge passing from an agent to an arm crosses the cut. Every edge from an agent to an arm has capacity 1 from construction, so such a cut has capacity at least 1.

    In the second case, an edge passing from an agent to an arm does not cross the cut. In this scenario, if agent $i$ is on the source side, then every arm in $F_i$ must also be on the source side. Otherwise that edge would cross the cut. Therefore, $U_A = \bigcup_{i \in A} F_i \subseteq B$. Therefore, the source-to-agent edges crossing the cut are precisely those incident to agents outside $A$. The total capacity of these edges is $\frac{N - |A|}{N}$ (since the number of agents not in $A$ is $N - |A|$ and each of these edges has capacity of $1/N$ by construction). Moreover, the edges going from an arm to the sink that cross the cut are edges that go from an arm in $B$ to the sink. The capacity of each of these edges is $p_k$, so we define the total capacity as $P(B) = \sum_{k \in B} p_k$. Therefore, the total cut capacity is defined as $\frac{N-|A|}{N} + P(B)$. Because $U_A \subseteq B$, we know that $P(B) \geq P(U_A)$ (since all elements in the sum are non-negative).

    Since we assume that $P(U_S) \geq |S|/N$, we know that the cut capacity satisfies:

    \[
    \begin{aligned}
    \frac{N-|A|}{N}+P(B)
    &\geq
    \frac{N-|A|}{N}+P(U_A)\\
    &\geq
    \frac{N-|A|}{N}+\frac{|A|}{N}\\
    &=1.
    \end{aligned}
    \]

    Therefore, every $s-t$ cut has capacity of at least 1. Further, we know that there is a cut of capacity of exactly 1: place only the source $s$ on the source side. The $N$ edges that cross the cut are the edges going from the source to an agent, who each have a capacity of $1/N$, or a total capacity of 1. Therefore, the minimum cut has capacity 1, and by the max-flow/min-cut theorem, there exists a flow of value 1 in the network.

    Let $y_{i,k}$ denote the flow from agent $i$ to arm $k$. Because the network's flow is 1 and the total capacity leaving the source is also 1, every edge from $s$ to an agent must be saturated. Therefore, $\sum_{k=1}^K y_{i,k} = \frac{1}{N}$ for every agent $i$ in $[N]$. Each agent $i$ only has outgoing edges to arms in $F_i$, meaning $y_{i,k} = 0$ for any arm $k \notin F_i$. Total capacity entering the sink can be computed as $\sum_{k=1}^K p_k = 1$. Then, since the flow has value 1, every arm-to-sink edge must also be saturated, giving us $\sum_{i=1}^N y_{i,k} = p_k$ for every arm $k$. Therefore, $y$ is a procedurally fair decomposition of $P$, meaning that if $P(U_S) \geq |S|/N$, $P$ is a procedurally fair policy.

    Therefore, $P$ is a procedurally fair policy if and only if $P(U_S) \geq \frac{|S|}{N}$ for every $S \subseteq [N]$.
\end{proof}

Using this characterization, we can then show that membership in the procedural core implies procedural fairness:

\begin{theorem}
\label{thm:PCimpliesPF}
Procedural core implies procedural fairness.
\end{theorem}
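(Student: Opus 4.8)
The plan is to prove the contrapositive: if $P$ is not procedurally fair, then $P$ is not in the procedural core. The first step is to recast procedural fairness as a bipartite feasibility condition. By definition, $P$ is procedurally fair exactly when the decision-share program $PF(\mu,P)$ attains its maximum value $1$, i.e.\ when there exist $y_{ij}\ge 0$ (with $y_{ij}=0$ for $j\notin F_i$) satisfying $\sum_{j\in F_i}y_{ij}=\tfrac1N$ for every agent and $\sum_{i:\,j\in F_i}y_{ij}\le p_j$ for every arm. This is precisely the question of whether a transportation problem on the bipartite agent--arm graph (edge $(i,j)$ iff $j\in F_i$) admits a flow saturating all agent demands $\tfrac1N$ subject to arm capacities $p_j$. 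By the Gale--Hoffman feasibility theorem (equivalently max-flow/min-cut, or the defect form of Hall's theorem), such a flow exists if and only if for every subset of agents $S\subseteq[N]$ we have $\tfrac{|S|}{N}\le q(S)$, where $q(S):=\sum_{j\in\bigcup_{i\in S}F_i}p_j$ is the total probability on arms favoured by someone in $S$. I would first establish this equivalence in both directions, noting that the case $S=[N]$ forces $q([N])=1$, so all mass of a procedurally fair policy sits on arms that are favourites of some agent.

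Next, assume $P$ is not procedurally fair. Then the above condition fails, so there is a witness set $S$ with $\tfrac{|S|}{N} > q(S)$. I would take the blocking coalition to be $A=S$ and let it redistribute its proportional mass according to the distribution $P'$ defined by $P'_j = p_j/q(S)$ for $j\in\bigcup_{i\in S}F_i$ and $P'_j=0$ otherwise; this is a valid element of $\Delta^K$ because $\sum_j P'_j = q(S)/q(S)=1$. For each $i\in A$, since $F_i\subseteq\bigcup_{i'\in S}F_{i'}$, the deviation yields $\tfrac{|A|}{N}\beta_i(P') = \tfrac{|A|}{N}\cdot\tfrac{1}{q(S)}\sum_{j\in F_i}p_j = \tfrac{|A|}{N\,q(S)}\,\beta_i(P)$. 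Because $\tfrac{|A|}{N}=\tfrac{|S|}{N}>q(S)$, the factor $\tfrac{|A|}{N q(S)}$ exceeds $1$, so every agent with $\beta_i(P)>0$ strictly improves and every agent with $\beta_i(P)=0$ is unchanged at $0$. As long as at least one member of $S$ has positive decision share, this is a valid blocking deviation (weak improvement for all, strict for one), certifying that $P$ is not in the procedural core.

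Finally, I would dispatch the degenerate case where every agent in $S$ has $\beta_i(P)=0$; here $q(S)=0$, the proportional construction is ill-defined, but the coalition can simply place all of its mass $\tfrac{|S|}{N}>0$ on a single favourite arm $j^{*}\in F_{i^{*}}$ of any member $i^{*}\in S$ (which exists since each agent has a nonempty favourite set). Then agent $i^{*}$ receives $\tfrac{|A|}{N}>0=\beta_{i^{*}}(P)$ while every other member receives at least $0=\beta_i(P)$, again a blocking deviation. Combining the cases shows that failure of procedural fairness always produces a blocking coalition, which is the contrapositive of the claim.

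I expect the main obstacle to be the first step: correctly identifying that procedural fairness is equivalent to the bipartite saturation condition and invoking the right feasibility theorem, since the core-blocking argument hinges entirely on the set $S$ supplied by the failure of that condition. A secondary subtlety worth stating carefully is that the blocking coalition must be exactly the witness set $S$ and not a singleton: shared favourite arms mean that singleton-coalition deviations alone cannot detect all violations. A small example ($N=3$ with agents $1,2$ sharing one arm and agent $3$ on another, and $p$ split evenly between the two arms) illustrates why larger coalitions are necessary and confirms that the construction scales the coalition's mass by $\tfrac{|A|}{N q(S)}>1$ to improve all members simultaneously.
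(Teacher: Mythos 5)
Your proof is correct, and it takes a genuinely different --- and substantially more careful --- route than the paper. The paper's own proof is a two-line argument using only singleton coalitions: a lone agent $i$ can deviate to put all mass on $F_i$, so core membership forces $\beta_i(P)\ge \tfrac1N$ for every $i$, and the paper then asserts that this yields procedural fairness. That last step is exactly where your approach earns its keep: when favourite sets overlap, $\beta_i(P)\ge\tfrac1N$ for all $i$ is strictly weaker than the existence of a feasible decomposition $y_{ij}$ with $\sum_{j\in F_i}y_{ij}=\tfrac1N$. Your own example makes the point ($N=3$, agents $1,2$ sharing arm $1$, agent $3$ on arm $2$, $P=(\tfrac13,\tfrac23)$ passes every singleton test yet is not procedurally fair, and only the two-agent coalition $\{1,2\}$ blocks it). By recasting procedural fairness as feasibility of the bipartite transportation problem, invoking the Gale--Hoffman/Hall condition $\tfrac{|S|}{N}\le q(S)$ for all $S\subseteq[N]$, and then converting any violating witness set $S$ into a blocking coalition via the rescaled distribution $P'_j=p_j/q(S)$ (with the $q(S)=0$ degenerate case handled separately), you prove the contrapositive in full generality. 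What the paper's argument buys is brevity and the correct statement in the special case of disjoint favourite sets; what yours buys is an actual proof of the theorem as stated, since larger coalitions are genuinely needed to detect all failures of procedural fairness. The only cosmetic remark is that the inequality-constrained program with total demand $1$ forces $\sum_{i:j\in F_i}y_{ij}=p_j$ at feasibility, which you implicitly use when observing that all mass of a procedurally fair policy must sit on somebody's favourite arm (the $S=[N]$ instance of your condition); stating that equivalence explicitly would make the reduction airtight.
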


\begin{proof}
    Suppose for the sake of contradiction, that the policy $P$ lies in the procedural core but is not procedurally fair. By Lemma \ref{lm:pfcoalprop}, we know that since $P$ is not procedurally fair, there exists some nonempty coalition $S \subseteq [N]$ such that $P(U_S) < |S|/N$. Let $\alpha = |S|/N$. Therefore, $P(U_S) < \alpha$.

    Because $S$ is nonempty and every agent has at least one favourite arm (by definition), $U_S$ is nonempty. Choose any arm $k^* \in U_S$, and construct a second policy $Q = (q_1,..., q_K)$. We construct this policy by taking, for every arm $k \in U_S$, and scaling its probability by $1/\alpha$. Therefore, the total probability assigned to arms in $U_S$ immediately after scaling is $P(U_S)/\alpha$, which we know is strictly less than 1 since $P(U_S) < \alpha$. Then, take the remaining probability available and assign it to arm $k^*$.

    Formally, we define the policy $Q = (q_1,..., q_K)$ as the following, for each arm $k \in [K]$:

    \[
    q_k=
    \begin{cases}
    \dfrac{p_k}{\alpha},
    &
    k\in U_S,\ k\neq k^\star,\\[8pt]
    \dfrac{p_{k^\star}}{\alpha}
    +
    \left(1-\dfrac{P(U_S)}{\alpha}\right),
    &
    k=k^\star,\\[10pt]
    0,
    &
    k\notin U_S.
    \end{cases}
    \]

    Note that all probabilities in $Q$ are non-negative, and that
    \[
    \begin{aligned}
        \sum_{k=1}^K q_k
        &= \left(\sum_{k\in U_S \setminus \{k^*\}} \frac{p_k}{\alpha}\right) + \left(\frac{p_{k^*}}{\alpha} + 1 - \frac{P(U_S)}{\alpha}\right) + \left(\sum_{k \notin U_S} 0\right) \\
        &= \frac{P(U_S) - p_{k^*}}{\alpha} + \frac{p_{k^*}}{\alpha} + 1 - \frac{P(U_S)}{\alpha} \\
        &= 1
    \end{aligned}
    \]

    Therefore, $Q$ is a valid policy. Now, consider any agent $i \in S$. Since $F_i \subseteq U_S$:

    \[
    \begin{aligned}
        \beta_i(Q)
        &= \sum_{k \in F_i} q_k \\
        &= \frac{1}{\alpha} \sum_{k \in F_i} p_k + \left(1 - \frac{P(U_S)}{\alpha}\right) \mathbf{1}\{k^* \in F_i\} \\
        &= \frac{\beta_i(P)}{\alpha} + \left(1 - \frac{P(U_S)}{\alpha}\right) \mathbf{1}\{k^* \in F_i\}
    \end{aligned}
    \]

    Multiplying both sides by $\alpha$ gives us $\alpha \beta_i(Q) = \beta_i(P) + \left(\alpha - P(U_S)\right) \mathbf{1}\{k^* \in F_i\}$.

    Since we know that $P(U_S) < \alpha$, we have that $\alpha - P(U_S) > 0$. Therefore,

    \[
    \alpha \beta_i(Q) = \beta_i(P) + \left(\alpha - P(U_S)\right) \mathbf{1}\{k^* \in F_i\} \geq \beta_i(P)
    \]

    Or, more simply, $\alpha \beta_i(Q) \geq \beta_i(P)$. Further, since there exists at least one agent $i^* \in S$ such that $k^* \in F_{i^*}$, we have:

    \[
    \alpha \beta_{i^*}(Q) = \beta_{i^*}(P) + \alpha - P(U_S) > \beta_{i^*}(P)
    \]

    Therefore,

    \[
    \frac{|S|}{N} \beta_i(Q) \geq \beta_i(P) \qquad \forall i \in S
    \]

    with at least one strict inequality, meaning coalition $S$ blocks $P$ using policy $Q$, contradicting the assumption that $P$ lies in the procedural core. Therefore, every policy in the procedural core must be procedurally fair.
\end{proof}

\subsection{Impossibility Results}

We show that different notions of fairness provably conflict, and that there are instances where one must be prioritized over the others. These results make explicit that no single policy can be guaranteed to satisfy all objectives simultaneously, underscoring that procedural fairness is not just another outcome-based criterion but a distinct axis of fairness.

\begin{proposition}
\label{thm:impossibility}
There exist reward structures for which no policy can simultaneously achieve perfect procedural fairness and perfect inequality minimization.
\end{proposition}

\begin{proof}
Consider a situation where $N\geq 2$ and $K \geq 2$, where:

\[
\mu_1 = (M, 0,...,0), \quad \mu_i=(0,1,0, ..., 0) \quad \forall i \geq 2
\]

where $0 < M < 1$. Under procedural fairness, each agent must allocate $1/N$ of the probability to their preferred arm, yielding $P_{\PF} = (\frac1N, \frac{(N-1)}{N}, 0,..., 0)$.

On the other hand, we want to find which policies minimize inequality. Under some arbitrary policy $P = (p_1,...,p_K)$, the expected utility of agent $1$ is $Mp_1$, and the expected utility of every other agent is $p_2$. Therefore, the inequality objective is $\frac{2}{N(N-1)}\sum_{i>j} (U_i(P) - U_j(P))^2$. We can see that the only non-zero pairwise utility is between agent 1 and the other $N-1$ agents, so $D(P) = \frac{2}{N(N-1)} (N-1)(Mp_1 - p_2)^2 = 2(Mp_1 - p_2)^2/N$. We can calculate the value of $D$ for the procedurally fair policy $P_{\PF}$ as:

\[
D(P_{\PF}) = \frac{2}{N}\left(M\frac{1}{N} - \frac{N-1}{N}\right)^2 = \frac{2(N-1-M)^2}{N^3}
\]

We know that $N-1-M>0$, which means that $D(P_{\PF})>0$. However, we can see that the minimum value of $D$ is 0 ( $\bar{P}= (\frac{1}{M+1}, \frac{M}{M+1}, 0,..., 0)$), meaning that no policy can simultaneously be procedurally fair and inequality minimizing in this reward structure.
\end{proof}

\begin{proposition}
\label{thm:impossibility2}
There exist reward structures for which no policy can simultaneously achieve perfect procedural fairness and perfect utilitarianism.
\end{proposition}

\begin{proof}
Consider a situation where $N\geq 2$ and $K \geq 2$, where:

\[
\mu_1 = (M, 0,...,0), \quad \mu_i=(0,1,0, ..., 0) \quad \forall i \geq 2
\]

where $0 < M < 1$. Under procedural fairness, each agent must allocate $1/N$ of the probability to their preferred arm, yielding $P_{\PF} = (\frac1N, \frac{(N-1)}{N}, 0,..., 0)$. On the other hand, the utility maximizing policy is $P_{\UT} = (0, 1, 0,...,0)$. Since agent 1 never gets their favourite arm under the utility maximizing policy, and any procedural policy is strictly worse than the utilitarian policy, we have a disparity and both goals cannot be maximized at the same time.
\end{proof}

\subsection{Pareto Incomparability}
We now compare utility-based Nash social welfare, which favours allocations where all agents receive non-trivial utility, to procedural fairness in terms of Pareto dominance along our metrics. While utility-based Nash social welfare doesn't account for procedural fairness, it serves as a strong comparator due to its prominence in the literature. Formally:

\begin{definition}[Pareto Dominance]
A policy \(P'\in\Delta^K\) \emph{Pareto dominates} another policy \(P\in\Delta^K\) with respect to the fairness metrics \((\PF,\,\IM,\,\UT)\) if $\PF(\mu^*, P') \;\ge\; \PF(\mu^*, P)$, $\IM(\mu^*, P') \geq \IM(\mu^*, P)$, and $\UT(\mu^*, P') \;\ge\; \UT(\mu^*, P)$, and at least one of these inequalities is strict.
\end{definition}

\begin{theorem}
\label{thm:nw_pf_incomparable}
Utility-based Nash welfare-optimal policies and procedurally fair policies are not guaranteed to Pareto dominate one another.
\end{theorem}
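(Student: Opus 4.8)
The statement is an incomparability (non-existence) claim, so the plan is to exhibit a single reward structure on which the Nash-welfare-optimal policy $P_{\mathrm{NW}}$ and the procedurally fair policy $P_{\mathrm{PF}}$ are mutually non-dominating; one such instance refutes dominance in \emph{both} directions at once. I would take the smallest nontrivial case, $N=2$ and $K=2$, and choose reward means so that (i) the two agents have \emph{distinct} favourite arms, and (ii) the two arms carry \emph{unequal} total utility, i.e.\ $\mu^*_{1,1}+\mu^*_{2,1}\neq \mu^*_{1,2}+\mu^*_{2,2}$. Condition (i) pins down the procedurally fair policy: each agent places its $1/N=\tfrac12$ mass on its own favourite, so $P_{\mathrm{PF}}=(\tfrac12,\tfrac12)$, and the allocation LP $PF(\mu^*,P_{\mathrm{PF}})$ attains its maximum value $1$. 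Condition (ii) is what drags the utilitarian and Nash-welfare optima off the egalitarian midpoint.

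A concrete candidate is agent $1$ with means $(0.8,0.2)$ and agent $2$ with means $(0.5,0.9)$, so agent $1$ favours arm $1$ and agent $2$ favours arm $2$. I would then carry out three computations. First, maximize the utility product $\bigl(0.2+0.6p\bigr)\bigl(0.9-0.4p\bigr)$ over $p\in[0,1]$ to locate $P_{\mathrm{NW}}=(q,1-q)$ and confirm $q\neq\tfrac12$ (strict concavity gives a unique maximizer). Second, evaluate $PF(\mu^*,P_{\mathrm{NW}})$ via its LP: with $F_1=\{1\}$ and $F_2=\{2\}$ the optimum is $\min(\tfrac12,q)+\min(\tfrac12,1-q)=\tfrac32-q<1$ for $q>\tfrac12$, so $P_{\mathrm{PF}}$ strictly beats $P_{\mathrm{NW}}$ on the $PF$ metric. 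Third, compare total utilities against the utilitarian optimum (here attained at $p=1$): since arm $1$ carries the larger utility sum and $q>\tfrac12$, $P_{\mathrm{NW}}$ has strictly greater total utility than $P_{\mathrm{PF}}$, hence $UF(\mu^*,P_{\mathrm{NW}})>UF(\mu^*,P_{\mathrm{PF}})$, so $P_{\mathrm{NW}}$ strictly beats $P_{\mathrm{PF}}$ on the $UF$ metric.

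These two facts are exactly what is needed: since Pareto dominance requires a weak improvement on \emph{all three} metrics, $P_{\mathrm{NW}}$ cannot dominate $P_{\mathrm{PF}}$ (it loses on $PF$) and $P_{\mathrm{PF}}$ cannot dominate $P_{\mathrm{NW}}$ (it loses on $UF$), regardless of how the $EF$ scores compare; this yields the theorem. The main obstacle is purely in the verification rather than in the idea: I must ensure the Nash-welfare maximizer is genuinely separated from $(\tfrac12,\tfrac12)$ and that both "wins" are \emph{strict}, which in turn requires double-checking the utilitarian baseline used to normalize $UF$ and the small allocation LP used to score $PF$. Conceptually the counterexample is forced to exist because procedural fairness fixes equal voice and is blind to utility magnitudes, whereas Nash welfare is pulled toward the higher-utility arm whenever the arms' total utilities differ, so the two objectives cannot coincide under condition (ii).
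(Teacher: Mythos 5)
Your proposal is correct and follows essentially the same route as the paper's proof: a $2\times 2$ instance with distinct favourite arms and asymmetric arm totals, where the computed Nash-welfare maximizer sits strictly between $(\tfrac12,\tfrac12)$ and the utilitarian optimum, so $P_{\mathrm{PF}}$ strictly wins on the $PF$ score while $P_{\mathrm{NW}}$ strictly wins on the $UF$ score, ruling out dominance in both directions. The only difference is the choice of numbers (the paper uses $\mu=\bigl(\begin{smallmatrix}0.4 & 1\\ 1 & 0.6\end{smallmatrix}\bigr)$ giving $p=\tfrac{1}{12}$), and your explicit LP evaluation of $PF(\mu^*,P_{\mathrm{NW}})=\tfrac32-q$ is a welcome bit of extra rigor the paper leaves implicit.
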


\begin{proof}
Consider the following counterexample utility matrix:

\[
\mu =
\begin{bmatrix}
A & 1 \\
1 & B
\end{bmatrix}
\]

Here, $0 \leq A < B < 1$. Thus, we know for certain that the procedurally fair policy here $P_{\PF} = (\frac{1}{2}, \frac{1}{2})$. Further, we also know that the utilitarian policy would be $P_{\UT}=(0, 1)$, as $A < B$. Thus, the utility-based Nash welfare solution would be Pareto incompatible with the procedurally fair solution if the probability it places on the rightmost arm is strictly greater than 0.5 but strictly less than 1.

Let $p$ represent the probability we pull the leftmost arm, and $1 - p$ denote the probability we pull the rightmost arm. Let $P_{\NSW}=(p, 1-p)$. Then, for each agent, we have utilities:
\[
U_1(P_{\NSW}) = p(A-1)+1
\]

\[
U_2(P_{\NSW}) = B+p(1-B)
\]

Then, to find the optimal utility-based Nash welfare solution, we want to optimize:

\[
f(p) = U_1(P_{\NSW})U_2(P_{\NSW}) = (p(A-1)+1)(B+p(1-B))
\]
\[
f'(p) = B(A-1)+2p(1-B)(A-1)+1-B = 0
\]
\[
p = \frac{B(2-A)-1}{2(1-B)(A-1)}
\]

Let $A = 0.4$ and $B = 0.6$, which satisfies our constraints from earlier. Then, $p = \frac{1}{12}$ and $1-p=\frac{11}{12}>\frac{1}{2}$, but less than 1. We can therefore derive that:
\[
\PF(\mu, P_{\PF}) = 1 > \PF(\mu, P_{\NSW})
\]
and
\[
\UT(\mu, P_{\NSW}) > \UT(\mu, P_{\PF})
\]

Therefore, neither policy can guarantee Pareto dominance over the other.

\end{proof}

This result underscores the critical limitation of relying on utility-based Nash welfare as a default fairness benchmark. While Nash welfare is widely used because it balances efficiency and inequality, it imposes an outcome-centric criterion that does not preserve equal representation in the decision-making process. Treating Nash welfare as sufficient effectively prioritizes efficiency over equal voice, a normative choice that is often left implicit.

\section{Experiments}

\subsection{Synthetic Evaluation}

We evaluate the objectives over 7,776 synthetic instances varying the numbers of agents and arms, favourite-set sizes, and homogeneous versus heterogeneous favourite sets. Preferences are generated from uniform, single-peaked, impartial-culture, and Mallows models; rewards are assigned consistently with each agent's ranking. We compare procedural fairness, inequality minimization, utilitarianism, egalitarian welfare, Nash social welfare, and the Generalized Gini Index. For procedural fairness, we optimize decision-share-based Nash welfare.

\subsubsection{Synthetic Instance Generation}
\label{sec:synthetic-generation}

We conduct a full factorial sweep over the number of agents
$N\in\{2,\ldots,10\}$, the number of arms $K\in\{2,\ldots,10\}$, and a
maximum favourite-set size $f_{\max}\in\{1,\ldots,10\}$, excluding
combinations for which $f_{\max}>K$.

We consider two favourite-set conditions. In the homogeneous condition,
every agent has the same number of favourite arms:
$|F_i|=f_{\max}$ for every $i\in[N]$. In the heterogeneous condition, we
independently draw $f_i$ uniformly from
$\{1,\ldots,f_{\max}\}$ for each agent and set $|F_i|=f_i$.

Preference orderings are generated using PrefVoting
\citep{HollidayPacuit2025}. We consider the uniform, single-peaked
\citep{Conitzer_2009}, impartial-culture \citep{guilbaud1952theories}, and
Mallows \citep{mallows1957non} models. For the Mallows model, we use
$\phi\in\{0.01,0.25,0.5,0.75,0.99\}$.

For each agent, we independently sample $K$ reward means from a normal
distribution with mean $0.5$ and standard deviation $0.25$, truncated to
$[0,1]$. We assign the largest sampled value to the agent's highest-ranked
arm, the second-largest value to their second-ranked arm, and so on. When
an agent has multiple favourite arms, each favourite arm receives the
largest value; the remaining values are assigned to non-favourite arms in
descending order of preference. Thus, agents may associate different
reward means with the same arm.

The resulting sweep contains 7,776 valid instances. We use random seed 42
throughout. Computing the optimal policies for all objectives takes
approximately ten minutes on an Apple M2 Pro processor. These experiments
assume known reward means and therefore evaluate the objectives themselves,
rather than the online learning process.

\subsubsection{Synthetic Results}

\begin{figure}[t]
\centering
\includegraphics[width=\linewidth]{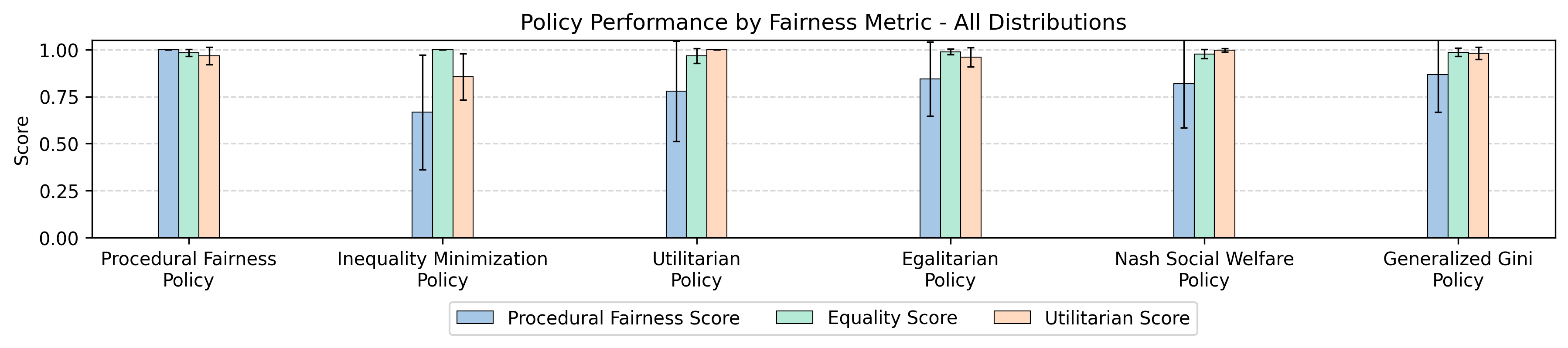}
\caption{Average fairness metrics per policy type. Each column refers to a specific policy maximizing a certain objective or fairness notion. Each bar represents a score for a fairness type, such as procedural fairness, inequality minimization, or utilitarianism, and the error bars represent one standard deviation from the mean across settings.}
\label{fig:all_distributions}
\end{figure}

\begin{table}[htbp]
\centering
\begin{tabular}{lrrr}
\toprule
 & PF Score & IM Score & UT Score \\
\midrule
PF Policy  & $1.00 \pm 0.00$ & $0.98 \pm 0.02$ & $0.97 \pm 0.05$ \\
\midrule
IM Policy  & $0.67 \pm 0.30$ & $1.00 \pm 0.00$ & $0.86 \pm 0.12$ \\
UT Policy  & $0.78 \pm 0.27$ & $0.97 \pm 0.04$ & $1.00 \pm 0.00$ \\
EG Policy  & $0.84 \pm 0.20$ & $0.99 \pm 0.02$ & $0.96 \pm 0.05$ \\
NSW Policy & $0.82 \pm 0.24$ & $0.98 \pm 0.02$ & $1.00 \pm 0.01$ \\
GGI Policy & $0.87 \pm 0.20$ & $0.99 \pm 0.02$ & $0.98 \pm 0.03$ \\
\bottomrule
\end{tabular}
\caption{Performance metrics for each policy. Reported as mean $\pm$ one standard deviation. Rows denote the policies optimizing each objective; columns denote fairness scores.}
\label{tab:performance_metrics}
\end{table}

Figure \ref{fig:all_distributions} shows the results of the different objectives in our experiment. As expected, our selected procedurally fair policies yield perfect procedural fairness scores (as the objective is designed to do). More importantly, however, they achieve the best balance across these three fairness metrics. On the other hand, optimizing for fairness notions that are not procedural fairness leads to significant drops in procedural fairness, indicating that it is difficult to incidentally satisfy without explicitly optimizing for it. Figure \ref{fig:all_distributions} illustrates this point. Another interesting finding is that while our selected procedurally fair policies do not perfectly satisfy the other two fairness notions, they achieve high fairness scores with low standard deviation. On the other hand, algorithms other than procedural fairness perform poorly on the procedural fairness metric and have significantly larger standard deviations, indicating that optimizing for equality or utility maximization does not inadvertently optimize for equal voice and is quite unstable in outcomes. Table \ref{tab:performance_metrics} reports the exact numerical results.

When it comes to preference distributions, only the procedural fairness score sees meaningful change, particularly on Mallows. At small values of $\phi$, our selected procedurally fair policies yield almost identical policies to utilitarianism and Nash welfare. However, as disagreement among agents grows, every outcome policy's procedural fairness erodes (except for inequality minimization), indicating that the tradeoff between representation and outcomes scales with disagreement. Inequality minimization is the only policy whose procedural score grows with disagreement, uncovering an interesting dynamic: under consensus, equalizing outcomes requires overriding the shared favourite, but under discord, equal outcomes and equal voice partially align. On the other hand, the high achievement of procedural fairness on the other two axes stays consistent.

\subsection{Real-World Example}

To demonstrate our approach on real-world preferences, we instantiate a bandit using the Mechanical Turk Dots dataset from PrefLib \cite{Preflib, Mao_Procaccia_Chen_2013}. This dataset contains complete preference orderings over four candidates ($K=4$), offering a tractable and interpretable setting. We uniformly sample 50 of the 800 voters and map an agent's first through fourth preferences to mean rewards of 0.9, 0.63, 0.37, and 0.1, respectively. We then evaluate our procedural fairness learning algorithm on this instance. Rewards are drawn from Beta distributions with agent-specific means and a fixed standard deviation of 0.1. Setting $\gamma = 0.7$ and $\alpha = 1$ (which were selected to have reasonably limited regret based on the regret bound), we conduct a single $100,000$-step run.

\begin{table}[htbp]
\centering
\begin{tabular}{lrrr}
\toprule
 & PF Score & IM Score & UT Score \\
\midrule
PF Policy  & $1.00$ & $0.99$ & $0.90$ \\
\midrule
IM Policy  & $0.74$ & $1.00$ & $0.80$ \\
UT Policy  & $0.38$ & $0.86$ & $1.00$ \\
EG Policy  & $0.74$ & $1.00$ & $0.80$ \\
NSW Policy & $0.76$ & $0.95$ & $0.99$ \\
GGI Policy & $0.74$ & $1.00$ & $0.80$ \\
\bottomrule
\end{tabular}
\caption{Performance metrics for each policy. Rows denote the policies optimizing each objective; columns denote fairness and utility scores.}
\label{tab:mturk_metrics}
\end{table}

\begin{figure}[t]
\centering
\includegraphics[width=\linewidth]{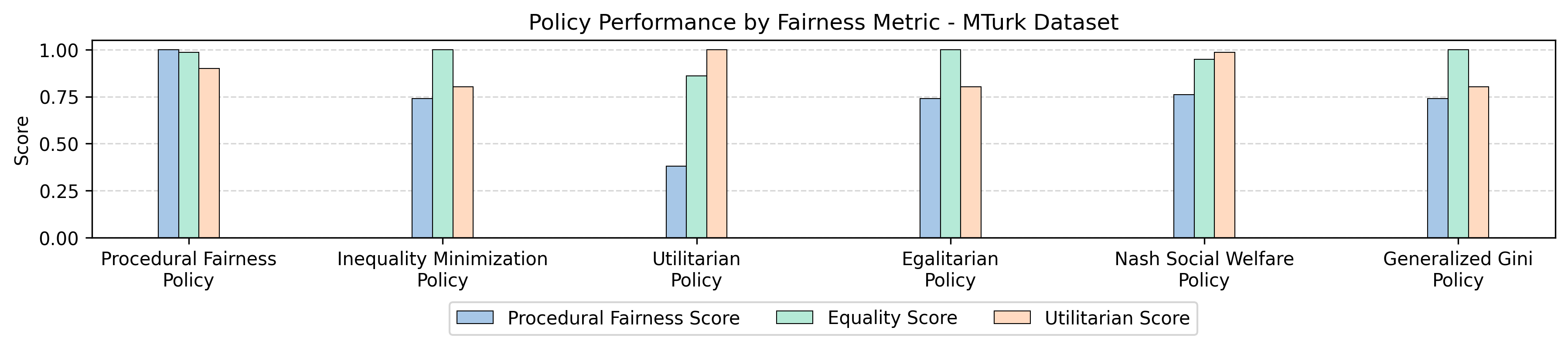}
\caption{Each fairness notion's optimal policy scored on the three metrics. Each bar indicates a fairness metric, and the columns indicate each fairness notion or algorithm's optimal policy. This graph has no error bars because it is the result from a single bandit instance.}
\label{fig:mturk}
\end{figure}

\begin{figure}[t]
\centering
\begin{subfigure}[t]{0.48\linewidth}
    \centering
    \includegraphics[width=\linewidth]{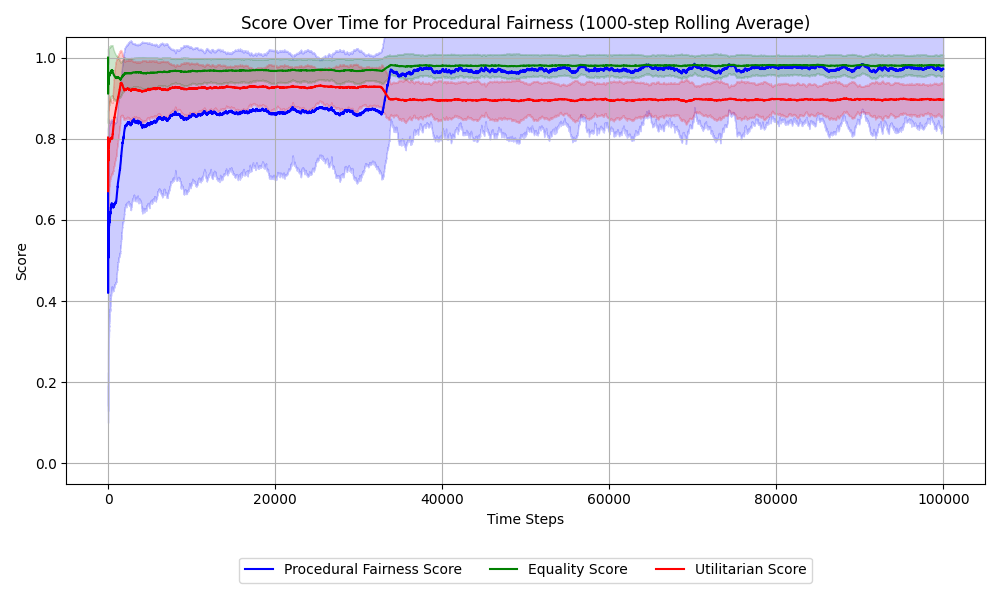}
    \caption{Procedural fairness algorithm's scores over time, shown as 1,000-step rolling averages with $\pm 1$ standard deviation.}
    \label{fig:learning_curve}
\end{subfigure}
\hfill
\begin{subfigure}[t]{0.48\linewidth}
    \centering
    \includegraphics[width=\linewidth]{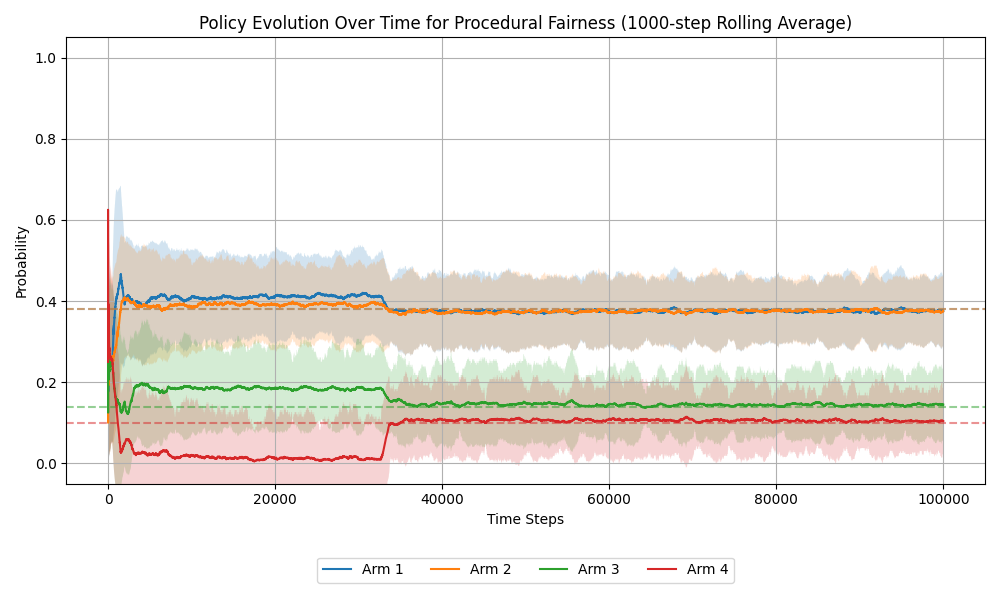}
    \caption{The procedural fairness algorithm's policy over time. The shading indicates 1 standard deviation with a rolling average of 1,000 steps.}
    \label{fig:policy_evol}
\end{subfigure}
\caption{Comparison of the procedural fairness algorithm's performance over time on the MTurk dataset.}
\label{fig:combined}
\end{figure}

Table \ref{tab:mturk_metrics} compares the optimal policies of each benchmark scored across our three fairness metrics, while Figure \ref{fig:mturk} visualizes the same comparison. Figure \ref{fig:combined} illustrates the policy's convergence and the evolution of the fairness metrics over time. Procedural fairness achieves the best overall balance across the three metrics, maintaining high equality and utility while guaranteeing equal representation, and is the only objective that achieves a score of 0.9 or more on every metric. Meanwhile, the utilitarian policy maximizes utility but performs poorly on procedural fairness. Overall, the results from this run are consistent with our findings from the synthetic data.

\section{Discussion \& Conclusions}

\subsection{Fairness as Legitimacy}

As established, fairness in multi-agent systems has too often been reduced to outcomes over procedure. It is easy to understand why: these are elegant and simple to compute measures. Nonetheless, these are values imposed from the outside. When fairness is defined by an external hand, it risks being viewed as illegitimate.

Procedural fairness offers us another path: a path of respecting the dignity of equal participation and influence in the decision-making process. It does not ask which balance of outcomes is most ideal, but how the decisions are made and who influenced them. This principle is not novel. It is the same principle that gave us the Magna Carta, that sustains constitutions and democracies centuries later, and is backed up by evidence as being preferable \cite{ANAND2001247, lind2013social, tyler1990people}. Such systems endure not because they guarantee optimal outcomes, but because the participants recognize the legitimacy of the decisions made.

For multi-agent decision-making, this means changing the design of systems from asking ``what distribution of rewards is best?'' to ``whose preferences shaped the decision?'' When we focus on centering the voice of the agents and ensuring fair process, we move away from normative judgements that impose values from the outside, towards systems that agents regard as legitimate and more accurately reflect what humans view as preferable.

\subsection{Tradeoffs as Design Choices}

What it means to be fair has never been a unifying principle. It is a contest of rival claims, philosophies, and moral visions, each irreconcilable with the others. This paper makes this blunt claim: no system can guarantee to satisfy all fairness criteria at once, and every fairness choice declares a normative judgment. To choose one is often to forsake another.

This is not a defect in our framework, but a reflection of the human condition. Fairness is not discovered in equations, but declared in values. Making decisions on behalf of a group therefore requires us to choose: whose values will govern? Whose interest will reign supreme? The designer's, or the agents' expressed through equal voice? Historically, we have relied on external metrics to account for fairness: efficiency, equality, or some balance of these two. But elegance and simplicity are not legitimacy. Procedural fairness does not erase tradeoffs; it exposes them, and insists that no agent is denied representation in the decisions that shape their fate.

\subsection{Robustness of Procedural Fairness}

Procedural fairness on its own is not only normatively appealing; our proposed fairness objective is also game-theoretically robust. Our procedurally fair objective lies in the procedural core: no coalition can use its proportional share of decision-making power to preserve every member's representation while increasing at least one member's voice. A coalition may still prefer a compromise that improves expected rewards, but this does not contradict the result: such a compromise trades direct representation for a better outcome. The procedural core asks whether a group can obtain a stronger collective voice, not whether the outcome is the only thing its members value. This reflects an important property of our contribution: representation is valuable in its own right, not merely as a means to better outcomes.

Our experiments tell a complementary story. Our procedurally fair policies preserve equal voice by design, while maintaining strong efficiency and low inequality across settings. Together, these results make procedural fairness more than a normative principle. It is a design principle that yields legitimacy, stability, and resilience. In a field where agents must not only cooperate but endure, procedural fairness stands as a strong baseline for multi-agent systems.

\subsection{Broader Applications of Procedural Fairness}

One of the most important points of this paper is that procedural fairness is not confined to the abstract setting of multi-agent bandits. It appeals to a greater idea that legitimacy turns less on the outcome achieved than on the process by which it was reached.

Such a takeaway must not be forgotten for artificial systems. Whether it be allocating computational resources, governing platforms, or coordinating autonomous agents, the critical question is not \emph{what} was decided, but \emph{how}. To embed procedural fairness into multi-agent systems is therefore not to borrow a human tradition for symbolic value, but to ensure that the systems we build today and tomorrow reflect not only intelligence, but humanity's enduring commitment to fairness through equal voice.

\bibliographystyle{plainnat}
\bibliography{references}

\end{document}